\newcommand{\be}{\begin{equation}}
\newcommand{\ee}{\end{equation}}
\newcommand{\ba}{\begin{eqnarray}}
\newcommand{\ea}{\end{eqnarray}}
\newcommand{\tr}{\operatorname{Tr}}
\newtheorem{proposition}{Proposition}
\newtheorem{Lemma}{Lemma}
\begin{document}
 \newwrite\bibnotes
    \def\bibnotesext{Notes.bib}
    \immediate\openout\bibnotes=\jobname\bibnotesext
    \immediate\write\bibnotes{@CONTROL{REVTEX41Control}}
    \immediate\write\bibnotes{@CONTROL{%
    apsrev41Control,author="08",editor="1",pages="1",title="0",year="1"}}
     \if@filesw
     \immediate\write\@auxout{\string\citation{apsrev41Control}}%
    \fi

\title{"All-versus-nothing" proof of genuine tripartite steering and 
    entanglement certification in the two-sided device-independent
    scenario}

\author{Shashank Gupta}
\email{shashankg687@bsoe.res.in}
\affiliation{S.   N.  Bose  National Centre  for Basic  Sciences, Salt
 Lake,    Kolkata    700    106,   India}  

 \author{Debarshi      Das}
   \email{dasdebarshi90@gmail.com}
\affiliation{S.   N.  Bose  National Centre  for Basic  Sciences, Salt
 Lake,    Kolkata    700    106,   India} 
  
\author{C. Jebarathinam} 
 \email{jebarathinam@gmail.com}
 \affiliation{Center for Theoretical Physics, Polish Academy of Sciences, Aleja Lotników 32/46, 02-668 Warsaw, Poland}

 \author{Arup Roy}  
 \email{arup145.roy@gmail.com}
\affiliation{S.   N.  Bose  National Centre  for Basic  Sciences, Salt
 Lake,    Kolkata    700    106,   India}

\author{Shounak Datta}
\email{shounak.datta@bose.res.in}
\affiliation{S.   N.  Bose  National Centre  for Basic  Sciences, Salt
  Lake,    Kolkata    700    106,   India}   

\author{A. S. Majumdar}
\email{archan@bose.res.in}
 \affiliation{S. N. Bose National Centre for
  Basic Sciences, Salt Lake, Kolkata 700 106, India} 
  
\date{\today}

\begin{abstract}

We consider the task of certification of genuine entanglement of tripartite states. For this purpose, we first present an  ``all-versus-nothing'' proof of  genuine tripartite  Einstein-Podolsky-Rosen (EPR) steering by demonstrating the non-existence of a hybrid local hidden state (LHS) model in the tripartite network as a motivation to our main result.  A full logical contradiction of the predictions of the hybrid LHS model  with quantum mechanical outcome statistics for any three-qubit generalized Greenberger-Horne-Zeilinger (GGHZ) states and pure W-class states is shown. Using logical contradiction, we can distinguish between the GGHZ and W-class state in two-sided device-independent (2SDI) steering scenario. We next formulate a 2SDI steering inequality which is a generalisation of the fine-grained steering inequality (FGI) derived in \cite{PKM14} for  the tripartite scenario. We show that the maximum quantum violation of this tripartite FGI  can be used to certify genuine entanglement of three-qubit pure states. 

\end{abstract}

\keywords{Quantum entanglement, Entanglement certification, Quantum steering, Quantum tomography}

\maketitle

\section{Introduction}

The presence of entanglement among spatially separated parties is one of the most intriguing  phenomena in quantum physics. A bipartite quantum state is called entangled  if and only if (iff) it is not separable. A pure multipartite quantum state is called genuinely multipartite entangled \cite{GUHNE91} iff it  is not separable with respect to any bipartition. Studies on multipartite entanglement have gained a lot of attention due to their foundational significance as well as their information theoretic applications, for example, in extreme spin squeezing \cite{Soren01}, high sensitivity metrology tasks \cite{Hyllus12,Toth12}, quantum computing using cluster states \cite{Raussendorf1}, measurement-based quantum computation \cite{Briegel09} and multiparty quantum network \cite{Murao99,Hillery99}. 
 
 Due to its complex structure, the detection of genuine multipartite entanglement is quite difficult to realize experimentally. Apart from the usual state tomography, various witnesses \cite{ALS1,Bourennane4,Bru2002,GUHNE91,Huber14,Clivaz2017,Li2017} have been proposed for detecting genuine multipartite entanglement (GME). However, these processes have experimental limitations due to the requirement of precise control over the systems. To overcome these limitations, one may use device-independent (DI)  witnesses \cite{Bowles18} of GME  which are based on observing measurement statistics without any characterization of the experimental devices. Several  multipartite Bell-type inequalities have been proposed to certify GME  in the  DI scenario  \cite{Seevinck01,Nagata02,BGL+11,BPR+17,Zwerger19,AST+19,Ananda20} which requires no trust/characterization of any observer's devices. 
Naturally, the question is raised whether one can certify the presence of genuine multipartite entanglement in a scenario where some of the observers' devices are not characterized, i.e., when the scenario is partially device-independent. 

The above-mentioned partially device-independent scenario is related to the concept of EPR steering which was initially proposed by Schrödinger \cite{Schrdinger35}. Later, in 2007, Wiseman \textit{et al.} \cite{Wiseman7} formulated this phenomenon in terms of an operational task. In the bipartite scenario, EPR steering (or, quantum steering) implies the possibility of producing a different sets of states at one party's (say, Bob) end by performing local quantum measurements of any two non-commuting observables on the other spatially separated party's (say, Alice) end. ERP steering is defined as the non-existence of local hidden state (LHS) models to describe the conditional states produced at Bob's side. Here, the outcome statistics of one subsystem (which is being `steered', in the present case Bob) are produced from the specific local measurements on other subsystem. Quantum steering is useful for semi-device-independent certification of all pure bipartite maximally entangled states \cite{Shrotriya21}, indefinite causal order \cite{Bavaresco2019}  and incompatible measurements \cite{Quintino14,Uola14,sarkar2021}. In partially device-independent scenario comprising many parties, genuine steering (nonexistence of hybrid-local-hidden-state model) demonstrates the presence of genuine multipartite entanglement \cite{CSA+15,Gupta21,Guptas21}. Now the question is: how to detect genuine steering in such scenarios?

In quantum theory, several no-go theorems depict the failure of certain physical models
to describe the outcome statistics produced in quantum mechanics. For instance, the well-known
Greenberger-Horne-Zeilinger (GHZ) theorem \cite{GHZ,Greenberger1990} rules out the existence of a local hidden variable (LHV) model for the GHZ state by presenting a full contradiction of the predictions of the LHV model with  the outcome statistics produced in quantum mechanics. The GHZ theorem has also been experimentally verified \cite{Deng10,Reid88}. Similarly, non-locality without inequalities was demonstrated by taking into account the contradiction of LHV 'elements of reality' with quantum mechanics \cite{Hardy92,Hardy93,Goldstein94}. Such ``all-versus-nothing'' proof of Bell-nonlocality rules out the possibility of local-hidden variable (LHV) models more uncompromisingly than Bell inequalities.

Recently,  EPR steering of any arbitrary two-qubit pure entangled state has been demonstrated without invoking any steering inequality \cite{CSX+16}. In the present work our first motivation is to extend the GHZ theorem for any three-qubit  generalized Greenberger-Horne-Zeilinger (GGHZ) state and any pure W-class state to the genuine tripartite steering scenario. Genuine tripartite steering is demonstrated through the violation of  hybrid LHS models \cite{CSA+15}, whereas, steering in general, is demonstrated through the violation of LHS models. In the genuine steering case, the models are hybrid in the sense that the hidden variable may not predetermine the state of one of the subsystems. In other words, the local state of one of the subsystems is steerable, whereas, in case of  LHS models, the hidden variable predetermines the state of each subsystem, and there is no scope of steering even at the level of a local subsystem. This forms the first step towards certifying genuine tripartite entanglement in the semi-DI scenario through sharp logical contradiction.

With the above goal, here we present an ``all-versus-nothing'' proof of genuine tripartite EPR steering for any three-qubit pure generalized Greenberger-Horne-Zeilinger (GGHZ) state and pure W-class state in both one-sided and two-sided device-independent scenario, where two or one of the three parties perform characterized measurements and the remaining parties perform black-box measurements. Next, to capture the genuine quantum steering correlation present in GGHZ states, we propose a 2SDI tripartite steering inequality which is a generalization of the fine-grained steering inequality in the bipartite scenario \cite{PKM14}. Fine-grained steering inequalities are derived using the fine-grained uncertainty relation \cite{OW10} and provide tighter steering criteria \cite{Chowdhury14,Chowdhury15} over steering inequalities based on Heisenberg's uncertainty relation  \cite{Reid89,Saunders2010}, as well as over steering inequalities based on entropic uncertainty relations \cite{Walborn11}. Quantum violation of our proposed 2SDI tripartite fine-grained steering inequality  implies tripartite steering in the 2SDI scenario. We then demonstrate that the maximum quantum violation of this inequality certifies the presence of genuine entanglement of pure three-qubits in the 2SDI scenario. 
 
The paper is organized as follows. In Sec. \ref{section1}, we review the concept of genuine tripartite steering in the 2SDI scenario. In Sec. \ref{section2}, we demonstrate the non-existence of the hybrid LHS model for any three-qubit pure GGHZ state using logical contradiction. In Sec. \ref{section3}, we derive a fine-grained steering inequality for the tripartite steering case in the 2SDI scenario. The main result of the paper, {\it viz.}, certification of genuine entanglement of three-qubit pure states is shown in Sec. \ref{section4}. 
 Concluding remarks are presented in Sec. \ref{section6}.

\section{Tripartite Quantum Steering in the two-sided \\
device-independent scenario}\label{section1}

We begin by recapitulating the definitions of tripartite and genuine tripartite steering \cite{Cavalcanti11,Cavalcanti16,Bihalan18} . In this context two scenarios arise: 1) one-sided device-independent (1SDI) scenario and 2) two-sided device-independent (2SDI) scenario. For the  purpose of the present study, we will restrict ourselves to the 2SDI scenario.

Consider a tripartite steering scenario where three spatially separated parties, say Alice, Bob and Charlie, share an unknown quantum system $\rho_{A'B'C} \in \mathcal{B}(\mathcal{H}_{A'} \otimes \mathcal{H}_{B'} \otimes \mathcal{H}_C)$ with the local Hilbert space dimension of Alice's subsystem and that of Bob's subsystem being arbitrary and the local Hilbert space dimension of Charlie's subsystem being fixed (X' represents uncharacterized subsystem, $X' \in \{A',B'\}$). Here, $\mathcal{B}(\mathcal{H}_{A'} \otimes \mathcal{H}_{B'} \otimes \mathcal{H}_C)$ stands for the set of all density operators acting on the Hilbert space $\mathcal{H}_{A'} \otimes \mathcal{H}_{B'} \otimes \mathcal{H}_C$. Alice performs a set of positive operator-valued measurements (POVM) $X_x$ with outcomes $a$. Here $x$ $\in$ $\{0,1,2, \cdots, n_A\}$ denotes the measurement choices of Alice and $a$ $\in$ $\{0, 1, \cdots, d_A\}$. On the other hand, Bob performs a set of positive operator-valued measurements (POVM) $Y_y$  with outcomes $b$. Here $y$ $\in$ $\{0,1,2, \cdots, n_B\}$ denotes the measurement choices of Bob and $b$ $\in$ $\{0, 1, \cdots, d_B\}$.
These local measurements by Alice and Bob prepare the set of conditional states on Charlie's side.  

The above scenario is called 2SDI since the POVM elements $\{M^{A'}_{a|X_x}\}_{a,X_x}$ (where $M^{A'}_{a|X_x} \geq 0$ $\forall a, x$; and $\sum_{a} M^{A'}_{a|X_x} = \mathbb{I}$ $\forall x$) associated with Alice's measurements and the POVM elements $\{M^{B'}_{b|Y_{y}}\}_{b,Y_y}$ (where $M^{B'}_{b|Y_{y}} \geq 0$ $\forall b, y$; and $\sum_{b} M^{B'}_{b|Y_{y}} = \mathbb{I}$ $\forall y$) associated with Bob's measurements are unknown. The steering scenario is characterized by the assemblage $\{\sigma^{C}_{a,b|X_x,Y_y}\}_{a,X_x,b,Y_y}$ which is the set of unnormalized conditional states on Charlie's side. Each element in the assemblage  is given by $\sigma^{C}_{a,b|X_x,Y_y}=P(a,b|X_x,Y_y)\varrho^{C}_{a,b|X_x,Y_y}$,  where $P(a,b|X_x,Y_y)$ is the conditional probability of getting the outcome $a$ and $b$ when Alice performs the measurement $X_x$ and Bob performs measurement $Y_y$ respectively; $\varrho^{C}_{a,b|X_x,Y_y}$ is the normalized conditional state on Charlie's end. Quantum theory predicts that all valid assemblages should satisfy the following criterion:
\begin{align}
\sigma^{C}_{a,b|X_x,Y_y}&= \tr_{A'B'}\Big[ \big( M^{A'}_{a|X_x} \otimes M^{B'}_{b|Y_y} \otimes \openone \big) \rho_{A'B'C} \Big] \nonumber \\ &\forall \hspace{0.2cm} \sigma^{C}_{a,b|X_x,Y_y} \in \{\sigma^{C}_{a,b|X_x,Y_y}\}_{a,X_x,b,Y_y}.
\label{assem2SDI}
\end{align}
The tripartite state imposes constraints on the observed assemblage. For example, if the shared state has no entanglement, or in other words, it is separable, then the assemblage has a local-hidden-state (LHS) model, i.e., if for all $a$, $x$, $b$, $y$, there is a decomposition of $\sigma^{C}_{a,b|X_x,Y_y}$ in the form \cite{Cavalcanti11,Cavalcanti16,Bihalan18},
\begin{equation}
\sigma^{C}_{a,b|X_x,Y_y}=\sum_\lambda P(\lambda) \, P(a|X_x,\lambda) \, P(b|Y_y,\lambda) \, \rho^{C}_\lambda ,
\label{lhs}
\end{equation}
where $\lambda$ denotes local hidden variable (LHV) which occurs with probability 
$P(\lambda) > 0$; $\sum_{\lambda} P(\lambda) = 1$; the quantum states $\rho^{C}_\lambda$
are called local hidden states (LHS) which satisfy $\rho^{C}_\lambda\ge 0$ and
$\tr\rho^C_\lambda=1$.

Now, suppose that Charlie performs a set of characterized POVMs  $Z_z$ with outcomes $c$ having the POVM elements $\{M^C_{c|Z_z}\}_{c,Z_z}$ ($M^C_{c|Z_z} \geq 0$ $\forall c, z$; and $\sum_{c} M^C_{c|Z_z} = \mathbb{I}$ $\forall z$) on $\{\sigma^{C}_{a,b|X_x,Y_y}\}_{a,X_x,b,Y_y}$. Here $z$ $\in$ $\{0,1,2, \cdots, n_C \}$ denotes the measurement choices of Charlie and $c$ $\in$ $\{0, 1, \cdots, d_C\}$. These measurements by Charlie on $\{\sigma^{C}_{a,b|X_x,Y_y}\}_{a,X_x,b,Y_y}$ produces measurement correlations $\{P(a,b,c|X_x,Y_y,Z_z)\}_{a,X_x,b,Y_y,c,Z_z}$, where 
\begin{equation}
P(a,b,c|X_x,Y_y,Z_z)= \tr \Big[  M^C_{c|Z_z}\,  \sigma^{C}_{a,b|X_x,Y_y}\Big].
\end{equation} 
The correlation $\{P(a,b,c|X_x,Y_y,Z_z)\}_{a,X_x,b,Y_y,c,Z_z}$ detects tripartite steerability iff it does not have the following LHV-LHV-LHS decomposition: 
\begin{align}
&P(a,b,c|X_x,Y_y,Z_z)= \nonumber\\
&\sum_{\lambda} P(\lambda) \, P(a|X_x,\lambda) \, P(b|Y_y, \lambda) \, P(c|Z_z, \rho^C_\lambda) 
 \forall a,x,b,y,c,z; 
\label{LHV-LHS1SDI}
\end{align}
where $P(c|Z_z, \rho^C_{\lambda})$ denotes the quantum probability of obtaining the outcome $c$ when measurement $Z_z$ is performed on LHS $\rho^C_{\lambda}$. On the other hand, if the shared state contains no genuine entanglement, or it is in the bi-separable form, as the following \cite{CSA+15}
\begin{align}
\rho_{A'B'C} &= \sum_\lambda p_\lambda^{A':B'C}\rho_\lambda^{A'}\otimes\rho_{\lambda}^{B'C} + \sum_\mu p_\mu^{B':A'C}\rho_\mu^{B'}\otimes\rho_{\mu}^{A'C} \nonumber\\
&+ \sum_\nu p_\nu^{A'B':C}\rho_\nu^{A'B'}\otimes\rho_{\nu}^{C}, 
\label{bisepstate}
\end{align}
where $p_\lambda^{A':B'C}, p_\mu^{B':A'C}$ and $p_\nu^{A'B':C}$ are probability distributions,  the assemblage (\ref{assem2SDI}) has the  form \cite{CSA+15},
\begin{align}
	\sigma^{C}_{a,b|X_x,Y_y} &= \sum_\lambda p_\lambda^{A':B'C} p(a|X_x,\lambda)\sigma_{b|Y_y,\lambda}^C \nonumber \\
	&+ \sum_\mu p_\mu^{B':A'C}p(b|Y_y,\mu)\sigma_{a|X_x,\mu}^C \nonumber \\
	&+ \sum_\nu p_\nu^{A'B':C} p(a,b|X_x,Y_y,\nu)\rho_\nu^C.
\label{assembisep}
\end{align}
The assemblage (\ref{assembisep}) contains three terms and is quite different from the assemblage (\ref{lhs}). The first term is an unsteerable assemblage from Alice to Charlie, but not necessarily from Bob to Charlie. The Charlie's assemblage is dependent on Bob's input, output and the common variable $\lambda$. The second term is unsteerable from Bob to Charlie but not necessarily from Alice to Charlie. In this case, Charlie's assemblage is dependent on Alice's input, output and the common hidden variable $\mu$. The third term has two features: (i) it is unsteerable from Alice-Bob to Charlie, (ii) the probability distribution $p(a,b|X_x,Y_y,\nu)$ arises due to local measurements performed on a possibly entangled state, and it may contain nonlocal quantum correlations. Here, the dependence of Charlie's assemblage on Alice and Bob's input and output comes only from the hidden variable $\nu$. When each element of an assemblage (\ref{assem2SDI}) can be written in the form (\ref{assembisep}), then the assemblage does not demonstrate genuine EPR steering in the 2SDI scenario but it may demonstrate steering. On the other hand,  if the assemblage (\ref{assem2SDI}) can be written in the form (\ref{lhs}), then the assemblage is unsteerable and no signature of steering is present in the 2SDI scenario. The form (\ref{assembisep}) can be viewed in terms of a hybrid LHS model.

\section{Motivation: Genuine Tripartite steering of GGHZ states}\label{section2}

The GHZ theorem that leads to "$1 = -1$", shows tripartite Bell nonlocality of the GHZ state $\ket{\psi_{GHZ}}=\frac{1}{\sqrt{2}}\left(\ket{000}+\ket{111}\right)$ in the simplest way. 
Motivated by this simple nonlocality argument, in Ref. \cite{CSX+16}, a simple demonstration of EPR steering was presented for any bipartite pure entangled state, where the LHS models lead to the logical contradiction "$2 = 1$". Here we will demonstrate that the existence of hybrid LHS models (\ref{assem2SDI}) leads to the contradiction "$2 = 1$"  in the 2SDI scenario for  any pure state that belongs to the generalized GHZ (Greenberger-Horne-Zeilinger) class having the form,
\begin{equation}
\big|\psi(\theta)_{\text{GGHZ}} \big\rangle=\cos\theta \ket{000}+ \sin\theta \ket{111}, \hspace{0.3cm} 0 < \theta <\frac{\pi}{2}
\label{GGHZ}
\end{equation}

Let us consider that Alice and Bob prepare the generalized GHZ (GGHZ) state given by Eq.(\ref{GGHZ}). They keep two particles at their possession and send the third particle to Charlie. Next,  Alice  performs  her choice of either one of two projective measurements of the observables $X_x$ (where $X_0$ = $\vec{\sigma} \cdot \hat{n}_0^A$, $X_1$ = $\vec{\sigma} \cdot \hat{n}_1^A$) and communicates the outcome $a$ $\in$ $\{0, 1\}$. Similarly,  Bob  performs his choice of either one of two projective measurements of the observables $Y_y$ (where $Y_0$ = $\vec{\sigma} \cdot \hat{n}_0^B$, $Y_1$ = $\vec{\sigma} \cdot \hat{n}_1^B$)  and  communicates the outcome $b$ $\in$ $\{0, 1\}$. Here $\vec{\sigma}$ = $(\sigma_x, \sigma_y, \sigma_z)$; $\hat{n}_0^A$, $\hat{n}_1^A$, $\hat{n}_0^B$, $\hat{n}_1^B$ are unit vectors; $\hat{n}_0^A$ $\neq$ $\hat{n}_1^A$; $\hat{n}_0^B$ $\neq$ $\hat{n}_1^B$. Henceforth, we shall denote $\vec{\sigma} \cdot \hat{n}$ by $\sigma_n$ for any unit vector $\hat{n}$.

After Alice's and Bob's measurements, a total of sixteen possible unnormalized conditional states $\sigma^{C}_{a,b|X_x,Y_y}$ (with $a$, $b$, $x$, $y$ $\in$ $\{0,1\}$) are prepared at Charlie's end. If Charlie's conditional states have hybrid LHS description,  then each of these unnormalized unconditional state can be written in the form of Eq.(\ref{assembisep}). Since the normalized conditional states at Charlie's end are pure, the assemblage is not the convex combination of the three terms of Eq. (\ref{assembisep}), but any one of the terms of Eq. (\ref{assembisep}). We find that the dependence of Charlie's assemblage on Alice and Bob's input and output may come from the common hidden variable and it is not the case that either Alice or Bob alone can change Charlie's state through their input and output choices. This is the feature of the third term of the Eq. (\ref{assembisep}). Hence, the above ensemble should satisfy the following relation:
\begin{align}
\sum_{\nu} p(\nu) \, \rho^{C}_{\nu} &= \rho^C_{\text{GGHZ}} \nonumber \\
&= \tr_{AB}\Big[\big|\psi(\theta)_{\text{GGHZ}} \big\rangle \big\langle \psi(\theta)_{\text{GGHZ}} \big| \Big]
\label{marchar}
\end{align}

Now, consider that  Alice  performs projective measurements of the following two observables: $X_0$ = $\sigma_x$, $X_1$ = $\sigma_y$. On the other hand,  Bob  performs projective measurements of the following two observables: $Y_0$ = $\dfrac{\sigma_x+\sigma_z}{\sqrt{2}}$, $Y_1$ = $\dfrac{\sigma_y+\sigma_z}{\sqrt{2}}$. In this case each of the normalized conditional states $\{\varrho^{C}_{a,b|X_x,Y_y}\}_{a,X_x,b,Y_y}$ (where $\sigma^{C}_{a,b|X_x,Y_y}=P(a,b|X_x,Y_y)\varrho^{C}_{a,b|X_x,Y_y}$) produced at Charlie's end is a pure state. For any fixed $x$ and fixed $y$, the four normalized conditional states $\varrho^{C}_{0,0|X_x,Y_y}$,  $\varrho^{C}_{0,1|X_x,Y_y}$, $\varrho^{C}_{1,0|X_x,Y_y}$, $\varrho^{C}_{1,1|X_x,Y_y}$ are four different pure states. Moreover, the normalized conditional states $\{\varrho^{C}_{a,b|X_x,Y_y}\}_{a,X_x,b,Y_y}$ satisfy the following,
\begin{align}
&\{\varrho^{C}_{0,0|X_0,Y_0}, \varrho^{C}_{0,1|X_0,Y_0}, \varrho^{C}_{1,0|X_0,Y_0}, \varrho^{C}_{1,1|X_0,Y_0} \}  \nonumber \\
&\neq \{\varrho^{C}_{0,0|X_0,Y_1}, \varrho^{C}_{0,1|X_0,Y_1}, \varrho^{C}_{1,0|X_0,Y_1}, \varrho^{C}_{1,1|X_0,Y_1} \}.
\label{con1}
\end{align}
which means no element of the set on LHS is equal to any element of the set on RHS of (\ref{con1}).
On the other hand, we obtain 
\begin{align}
& \sigma^{C}_{0,0|X_1,Y_0}=\sigma^{C}_{0,0|X_0,Y_1}, \nonumber \\
&\sigma^{C}_{0,1|X_1,Y_0}=\sigma^{C}_{0,1|X_0,Y_1}, \nonumber \\
&\sigma^{C}_{1,0|X_1,Y_0}=\sigma^{C}_{1,0|X_0,Y_1}, \nonumber \\
&\sigma^{C}_{1,1|X_1,Y_0}=\sigma^{C}_{1,1|X_0,Y_1}. 
\label{con3}
\end{align}
and
\begin{align}
& \sigma^{C}_{0,0|X_1,Y_1} = \sigma^{C}_{1,0|X_0,Y_0}, \nonumber \\
&\sigma^{C}_{0,1|X_1,Y_1} = \sigma^{C}_{1,1|X_0,Y_0}, \nonumber \\
&\sigma^{C}_{1,0|X_1,Y_1} = \sigma^{C}_{0,0|X_0,Y_0}, \nonumber \\
&\sigma^{C}_{1,1|X_1,Y_1} = \sigma^{C}_{0,1|X_0,Y_0}, 
\label{con4}
\end{align}
Hence, a total of eight different conditional states are produced on Charlie's side, each of which are pure states.

Now, let us assume that the above conditional states (which are pure states) have hybrid LHS description using the ensemble $\{ P(\nu) \, \rho^{C}_{\nu} \}$ and stochastic maps $P(a,b|X_x,Y_y,\nu)$ satisfying Eq.(\ref{assembisep}). It is well-known that a pure state cannot be expressed as a convex sum of other different states, i.e., a density matrix of pure state can only be expanded by itself. Therefore, we can write the following using Eqs.(\ref{assembisep}), (\ref{con1}), (\ref{con3}) and (\ref{con4}),
\begin{align}
&\sigma^{C}_{0,0|X_0,Y_0} =  P(1) \, \rho^{C}_{1}, \nonumber \\
&\sigma^{C}_{0,1|X_0,Y_0} =  P(2) \, \rho^{C}_{2}, \nonumber \\
&\sigma^{C}_{1,0|X_0,Y_0} =  P(3) \, \rho^{C}_{3}, \nonumber \\
&\sigma^{C}_{1,1|X_0,Y_0} =  P(4) \, \rho^{C}_{4}, 
\label{lhs1}
\end{align}
\begin{align}
&\sigma^{C}_{0,0|X_0,Y_1} =  P(5) \, \rho^{C}_{5}, \nonumber \\
&\sigma^{C}_{0,1|X_0,Y_1} =  P(6) \, \rho^{C}_{6}, \nonumber \\
&\sigma^{C}_{1,0|X_0,Y_1} =  P(7) \, \rho^{C}_{7}, \nonumber \\
&\sigma^{C}_{1,1|X_0,Y_1} =  P(8) \, \rho^{C}_{8},
\label{lhs2}
\end{align}
\begin{align}
& \sigma^{C}_{0,0|X_1,Y_0}=P(5) \, \rho^{C}_{5}, \nonumber \\
& \sigma^{C}_{0,1|X_1,Y_0}=P(6) \, \rho^{C}_{6}, \nonumber \\
& \sigma^{C}_{1,0|X_1,Y_0}=P(7) \, \rho^{C}_{7}, \nonumber \\
& \sigma^{C}_{1,1|X_1,Y_0}=P(8) \, \rho^{C}_{8}. 
\label{lhs3}
\end{align}
\begin{align}
&\sigma^{C}_{0,0|X_1,Y_1} = P(3) \, \rho^{C}_{3}, \nonumber \\
&\sigma^{C}_{0,1|X_1,Y_1} = P(4) \, \rho^{C}_{4}, \nonumber \\
& \sigma^{C}_{1,0|X_1,Y_1} = P(1) \, \rho^{C}_{1}, \nonumber \\
&\sigma^{C}_{1,1|X_1,Y_1} = P(2) \, \rho^{C}_{2},
\label{lhs4}
\end{align}
We can therefore, claim that the ensemble $\{ p(\nu) \, \rho^{C}_{\nu} \}$ consists of eight hybrid LHS: $\{P(1) \rho_1^C$, $P(2) \rho_2^C$, $P(3) \rho_3^C$, $P(4) \rho_4^C$, $P(5) \rho_5^C$, $P(6) \rho_6^C$, $P(7) \rho_7^C$, $P(8) \rho_8^C \}$ which reproduces the conditional states $\{\sigma^{C}_{a,b|X_x,Y_y}\}_{a,X_x,b,Y_y}$ at Charlie's end. Now, using Eq.(\ref{marchar}) we can write,
\begin{equation}
\sum_{\nu=1}^{8} P(\nu) \, \rho^{C}_{\nu} = \rho^C_{\text{GGHZ}}.
\label{newww}
\end{equation}

Next, summing Eqs.(\ref{lhs1}), (\ref{lhs2}), (\ref{lhs3}) and (\ref{lhs4}), and then taking trace, the left-hand sides give $4 \tr[\rho^C_{\text{GGHZ}}] = 4$. Here we have used the fact: $\sum_{a=0}^{1} \sum_{b=0}^{1} \sigma^{C}_{a,b|X_x,Y_y} = \rho^C_{\text{GGHZ}}$ $\forall$ $x,y$.
On the other hand, the right-hand sides give $2 \tr[\rho^C_{\text{GGHZ}}] = 2$ following Eq.(\ref{newww}). 
Hence, this leads to a full contradiction of "$2 = 1$".

\begin{enumerate}
	\item \textit{Remark-1} The existence of hybrid LHS models leads to the contradiction $"2 = 1"$ in the 1SDI scenario for any pure GGHZ state when Alice performs dichotomic projective measurements corresponding to the observables: $X_0 = \sigma_x$, $X_1 = \sigma_y$ (see Appendix (\ref{B}) for details).
	\item \textit{Remark-2} The existence of hybrid LHS models leads to the contradiction $"2 = 1"$ in the 1SDI scenario for any pure W-class state when Alice performs the same dichotomic measurements as in case of GGHZ state (see Appendix (\ref{C}) for details).
	\item \textit{Remark-3} The existence of hybrid LHS models leads to the contradiction $"4 = 1"$ in the 2SDI scenario for any pure W-class state when Alice's and Bob's dichotomic projective measurements are same as in case of GGHZ state (see Appendix (\ref{D}) for details).
	\item \textit{Remark-4} The existence of hybrid LHS models leads to no contradiction in both 1SDI and 2SDI scenarios for any pure bi-separable state. However, the existence of  LHS models lead to a contradiction. (see Appendix (\ref{E}) for details).
\end{enumerate}
Using the assemblage decomposition (\ref{lhs}) and similar arguments, it can be shown that the existence of LHS model also lead to contradiction for GGHZ state in our 2SDI scenario. Note that the existence of an LHS model implies the existence of a hybrid LHS model, but not the other way around. This follows from the reasoning that the existence of  LHS models imply the absence of steering which also signify the absence of genuine steering in the multipartite scenario. The above sharp logical contradiction for demonstrating the non-existence of LHS models for the GGHZ states  and W-class states generalizes the EPR paradox to the case of  pure three-qubit entangled states.
Here, Alice's and Bob's two different local measurements prepare different pure conditional states at Charlie's end. 
In Ref. \cite{GBD+18}, it has been demonstrated that  perfect correlations of the EPR paradox can be detected by the 
algebraic maximum of the sum of two conditional probabilities. Similarly, in order to detect the correlation of the GGHZ state demonstrated by the above contradiction, one  may consider the following sum of four conditional probabilities:
\begin{eqnarray}
 CP&:=&P(0_{Z_0}|1_{X_0}1_{Y_0})+P(0_{Z_0}|0_{X_1}1_{Y_1}) \nonumber \\
 &+&P(0_{Z_1}|0_{X_0}1_{{Y_1}})+P(0_{Z_1}|0_{X_1}1_{Y_0}).
 \label{Cp2DI1}
\end{eqnarray}
Here, $P(c_{Z_{z}}|a_{X_x} b_{Y_y})$ denotes the conditional probability of occurrence of the outcome $c$ when Charlie performs measurement $Z_z$, given that Alice and Bob get the outcome $a$ and $b$ by performing measurements $X_x$ and $Y_y$, respectively (with $a$, $b$, $c$, $x$, $y$, $z$ $\in$ $\{0,1\}$).
It can be checked that the GGHZ state gives rise to the algebraic maximum of $4$ for the above quantity for
the following choice of measurements:
\begin{eqnarray}\label{obsGGHZp}
&& X_0=\sigma_x; \quad  Y_0=\sin2\theta \sigma_x +\cos2\theta\sigma_z \quad Z_0=\sigma_x \nonumber \\
&& X_1=\sigma_y;  \quad Y_1=\cos2\theta \sigma_z +\sin2\theta \sigma_y \quad Z_1=\sigma_y
\end{eqnarray}

\section{Fine-grained tripartite steering inequality}\label{section3}

We now  present a fine-grained steering inequality whose violation detects tripartite quantum steering in the 2SDI scenario. The form of the inequality is motivated from the above expression of $CP$ given in Eq.(\ref{Cp2DI1}).

Consider the following two-sided device-independent tripartite scenario:  Alice  performs two arbitrary  black-box dichotomic measurements $X_x$ with $x$ $\in$ $\{0,1\}$ having outcomes $a$ $\in$ $\{0,1\}$. Bob  performs two arbitrary black-box dichotomic measurements $Y_y$ with $y$ $\in$ $\{0,1\}$ having outcomes $b$ $\in$ $\{0,1\}$.  Charlie performs two arbitrary mutually unbiased qubit measurements $Z_z$ with $z$ $\in$ $\{0,1\}$ having outcomes $c$ $\in$ $\{0,1\}$. In the context of this scenario, the tripartite correlation $P(a,b,c|X_x,Y_y, Z_z)$ does not detect tripartite steerability iff it has the following LHV-LHV-LHS decomposition: 
\begin{align}
&P(a,b,c|X_x,Y_y,Z_z) \nonumber \\
&= \sum_{\lambda} P(\lambda) \, P(a|X_x,\lambda) \, P(b|Y_y, \lambda) \, P(c|Z_z, \rho^C_\lambda) 
\label{LHS}
\end{align}

From Eq.(\ref{LHS}), an arbitrary conditional probability distribution can be written as,
\begin{align}
&P(c_{Z_{z}}|a_{X_x} b_{Y_y}) \nonumber \\
&= \frac{\sum_{\lambda} P(\lambda) \, P(a|X_x,\lambda) \, P(b|Y_y, \lambda) \, P(c|Z_z, \rho^C_\lambda)}{P(a,b|X_x,Y_y)} .
\label{Con1}
\end{align}
Now, from the inequality: $\sum_i x_i y_i \leq \big( \max\limits_{i} \lbrace x_i \rbrace \sum_i y_i \big)$ for $x_i \geq 0$ and $y_i \geq 0$, one can write from Eq.(\ref{con1}),
\begin{align} \label{fullyprod}
&P(c_{Z_{z}}|a_{X_x} b_{Y_y}) \nonumber \\
& \leq \max\limits_{\lambda} \big[P(c|Z_z, \rho^C_\lambda) \big]  \Bigg(\frac{\sum_{\lambda} P(\lambda) \, P(a|X_x,\lambda) \, P(b|Y_y, \lambda)}{P(a,b|X_x,Y_y)}\Bigg) \nonumber \\
&= P(c|Z_z, \rho^C_{\lambda_{\text{max}}}),
\end{align}
where we have used $P(a,b|X_x,Y_y) = \sum_{\lambda} P(\lambda) \, P(a|X_x,\lambda) \, P(b|Y_y, \lambda)$ and $P(c|Z_z, \rho^C_{\lambda_{\text{max}}})$ = $\max\limits_{\lambda} \big[P(c|Z_z, \rho^C_\lambda)\big]$. The above inequality is saturated when $\rho^{C}_{\lambda}$ = $\rho^C_{\lambda_{\text{max}}}$ $\forall$ $\lambda$.

Now, let us consider the following sum of conditional probabilities
\begin{align}
\overline{CP} &= P(c_{Z_{0}}|a_{X_0} b_{Y_0}) + P(c_{Z_{1}}|a^{'}_{X_0} b^{'}_{Y_1}) \nonumber \\
&+ P(c_{Z_{0}}|a^{''}_{X_1} b^{''}_{Y_1}) + P(c_{Z_{1}}|a^{'''}_{X_1} b^{'''}_{Y_0}),
\label{cp22}
\end{align}
with $a$, $a^{'}$, $a^{''}$, $a^{'''}$, $b$, $b^{'}$, $b^{''}$, $b^{'''}$, $c$ $\in$ $\{0,1\}$. Note that CP given by Eq.(\ref{Cp2DI1}) is a specific case of $\overline{CP}$.
Since, the trusted party Charlie performs two arbitrary mutually unbiased qubit measurements, following the approach adopted for deriving the fine grained bipartite steering inequality in \cite{PKM14}, one obtains
\begin{align}
\overline{CP} & \leq 2 \max\limits_{\{\tilde{Z}_0, \tilde{Z}_1\}} \big[P(c| \tilde{Z}_0, \rho^C_{\lambda_{\text{max}}}) + P(c| \tilde{Z}_1, \rho^C_{\lambda_{\text{max}}}) \big],
\label{fgi}
\end{align}
where $\{(\tilde{Z}_0, \tilde{Z}_1\}$ ranges over all possible pairs of mutually unbiased qubit measurements. The right hand side of the above inequality measures the uncertainty arising from mutually unbiased qubit measurements $\{\tilde{Z}_0, \tilde{Z}_1 \}$ and is bounded by the fine grained uncertainty relation \cite{OW10}.
 
The task of tripartite quantum steering is demonstrated if  Alice and Bob are able to convince Charlie that their shared state is entangled. Let us discuss Alice-Bob's strategy to cheat Charlie when Charlie's particle is a qubit. Alice and Bob try to maximize the right hand side of inequality (\ref{fgi}) using the LHS model. Here we consider two different scenarios separately \cite{PKM14}. 

In the 1st scenario, Alice and Bob get the information of $\{Z_0, Z_1\}$ before preparing the tripartite state. In this case the following can be shown \cite{PKM14}
\begin{align}
\overline{CP} & \leq 2 (1+\frac{1}{\sqrt{2}}) \nonumber \\
& = 2 + \sqrt{2}.
\label{fgi2}
\end{align}
The above inequality can be derived using the fine-grained uncertainty relation and its violation implies tripartite quantum steering  in the 2SDI scenario.

In the 2nd scenario, Alice and Bob prepare the state without getting the information of $\{Z_0, Z_1\}$. Hence, in this case Alice and Bob prepare all systems without any knowledge of Charlie's set of observables. In this scenario, following the approach adopted in  \cite{PKM14}, it can be shown that 
\begin{align}
\overline{CP} \leq  3.
\label{fgi3}
\end{align}
Quantum violation of the above inequality implies tripartite quantum steering in 2SDI scenario.

Hence, for the expression (\ref{Cp2DI1}), when the shared state does not demonstrate tripartite steering, we can write 
\begin{align} 
 CP&=P(0_{Z_0}|1_{X_0}1_{Y_0})+P(0_{Z_0}|0_{X_1}1_{Y_1}) \nonumber \\
 &\hspace{0.4cm}+ P(0_{Z_1}|0_{X_0}1_{{Y_1}})+P(0_{Z_1}|0_{X_1}1_{Y_0}) \nonumber \\
& \le 2+\sqrt{2} \hspace{1cm} \text{ (1st Scenario)} \nonumber \\
& \le 3 \hspace{1.88cm} \text{ (2nd Scenario)} .
 \label{CP2DI}
\end{align} 
Note that any pure GGHZ state given by Eq.(\ref{GGHZ}) violates the above 2SDI tripartite steering inequality to its algebraic maximum of $4$ for the observables given in Eq. (\ref{obsGGHZp}).


\section{Certification of genuinely entangled three-qubit pure states}\label{section4}

In this section we will derive the main result of the paper regarding certification of genuine tripartite entanglement.
 We will  show that the maximum quantum violation of the above fine-grained inequality (FGI) given by Eq. (\ref{CP2DI}) can be used as a tool for certification of genuinely entangled three-qubit pure states in the 2SDI scenario.
We adopt here a two-step process. At first, we prove that if the shared state is a three-qubit state, the maximum violation of the FGI given by (\ref{CP2DI}) certifies that the state is genuinely entangled pure state. We then show that if the dimension of the shared state is $d_A \times d_B \times 2$,  the maximum violation of the FGI given by (\ref{CP2DI}) certifies that the state is a direct sum of copies of three-qubit genuinely entangled pure states. The analysis presented below is summarized  in the form of the result, stated at the end of this section.

\begin{Lemma}
Suppose that the trusted party, Charlie, performs projective qubit mutually unbiased measurements  corresponding to the operators $Z_{0}=\sigma_x$ and  $Z_{1}=\sigma_y$ and the shared state is a three-qubit state. Then, maximum violation of FGI given by (\ref{CP2DI}) certifies that the three-qubit state is genuinely entangled pure state.
\label{lemma1}
\end{Lemma}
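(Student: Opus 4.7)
The plan starts from the observation that ``maximum violation'' of the FGI refers to $CP$ saturating its algebraic maximum of $4$, since each conditional probability in Eq.(\ref{Cp2DI1}) is individually bounded above by $1$. This forces every term in $CP$ to equal $1$ with strictly positive Alice-Bob marginals, so in assemblage language four steering elements on Charlie must be proportional to the $+1$ eigenstate projectors of $\sigma_x$ or $\sigma_y$:
\begin{align*}
\sigma^{C}_{1,1|X_0,Y_0}&\propto \ket{+}\!\bra{+},& \sigma^{C}_{0,1|X_1,Y_1}&\propto\ket{+}\!\bra{+},\\
\sigma^{C}_{0,1|X_0,Y_1}&\propto\ket{+i}\!\bra{+i},& \sigma^{C}_{0,1|X_1,Y_0}&\propto\ket{+i}\!\bra{+i}.
\end{align*}
Since Alice's and Bob's measurements are black-box dichotomic, I take them to be projective without loss of generality via Naimark dilation, which leaves Charlie's assemblage invariant.

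Next I would argue purity of $\rho_{ABC}$. Using the spectral decomposition $\rho_{ABC}=\sum_k p_k\ket{\Psi_k}\!\bra{\Psi_k}$, each rank-one assemblage element above splits into a sum of positive semidefinite contributions from the individual $\ket{\Psi_k}$. A sum of PSD matrices is rank-one only if every nonzero summand is proportional to the same rank-one projector, so for each $k$ and each of the four outcome combinations either $P(a,b|X_x,Y_y)_{\Psi_k}=0$ or $\ket{\Psi_k}$ reproduces the same pure conditional state on Charlie as $\rho_{ABC}$. I then need to show that these four pure-conditional-state constraints, together with the mutually unbiased character of $\{\ket{+},\ket{+i}\}$ and the unavoidable non-commutativity of the optimal Alice and Bob measurements (commuting pairs on either side would render two of the four conditions redundant and preclude $CP=4$), force the $\ket{\Psi_k}$ to coincide up to an overall phase. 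This is the principal technical obstacle, and I expect it to follow from an explicit amplitude analysis in the joint eigenbases of Alice's and Bob's two measurements. Once established, $\rho_{ABC}=\ket{\Psi}\!\bra{\Psi}$ is pure.

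To finish, I would certify genuine tripartite entanglement of $\ket{\Psi}$ by excluding each bipartite product decomposition. If $\ket{\Psi}=\ket{\phi}_{AB}\otimes\ket{\chi}_C$, Charlie's conditional state is always $\ket{\chi}$, so $CP=2\bigl[|\langle +|\chi\rangle|^2+|\langle +i|\chi\rangle|^2\bigr]\le 2(1+1/\sqrt{2})<4$ by the fine-grained uncertainty relation for the mutually unbiased $\sigma_x,\sigma_y$. If $\ket{\Psi}=\ket{\phi}_A\otimes\ket{\chi}_{BC}$, Alice's outcomes drop out of the conditioning; the second term $P(0_{Z_0}|0_{X_1}1_{Y_1})$ and the third term $P(0_{Z_1}|0_{X_0}1_{Y_1})$ both condition on Bob's outcome $1$ for $Y_1$ and hence on the same pure state on $C$, which would have to be simultaneously the $+1$ eigenstate of $\sigma_x$ and of $\sigma_y$ --- an immediate contradiction. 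A symmetric argument using the second and fourth terms (both conditioning on Alice's outcome $0$ for $X_1$) rules out $\ket{\Psi}=\ket{\phi}_B\otimes\ket{\chi}_{AC}$. Hence $\ket{\Psi}$ is a genuinely tripartite entangled pure three-qubit state, as claimed.
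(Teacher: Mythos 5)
Your overall strategy mirrors the paper's: extract from $CP=4$ the four conditions that Charlie's normalized conditional states for the relevant outcome pairs must be the pure states $\ket{+},\ket{+},\ket{+i},\ket{+i}$, rule out the three biseparable pure structures (the $C$-factorized case via the fine-grained uncertainty bound, the $A$- and $B$-factorized cases via the impossibility of a simultaneous $\sigma_x,\sigma_y$ eigenstate), and handle mixedness by decomposing the state into pure components and arguing each component must reproduce the same four pure conditional states. Up to that point your argument is sound, and your explicit remark that a component can also escape a condition by having zero conditioning probability is a point the paper glosses over.

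However, there is a genuine gap at exactly the step you yourself flag as the ``principal technical obstacle'': you never show that all spectral components $\ket{\Psi_k}$ satisfying the four pure-conditional-state constraints for the \emph{same} Alice/Bob settings must coincide, you only say you ``expect'' it to follow from an amplitude analysis. This uniqueness claim is not a routine verification; it is precisely the content of the paper's Proposition 1 (no two genuinely entangled three-qubit pure states maximally violate the FGI for the same untrusted measurement settings), and even the paper does not prove it analytically --- it is supported only by a numerical search over $10^6$ random states. Without this ingredient your purity conclusion $\rho_{ABC}=\ket{\Psi}\!\bra{\Psi}$ does not follow: a convex mixture of two distinct genuinely entangled pure states, each reproducing the required conditional states for the same settings, would still give $CP=4$, and nothing in your proposal excludes it. (Your side claim that commuting Alice or Bob observables preclude $CP=4$, used to close the zero-probability loophole, is also asserted rather than argued, though that one can be patched by an elementary calculation.) As it stands, the proof is incomplete at its central step; you would either need to supply the amplitude analysis establishing uniqueness, or, as the paper does, invoke a separately established (there, numerical) statement playing the role of Proposition 1.
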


\begin{proof}
Note that the conditional probabilities in this FGI can be written as $P(c_{Z_{z}}|a_{X_x} b_{Y_y})$ = $\tr \left( \Pi_{c|Z_z}	\cdot \varrho^{C}_{a,b|X_x,Y_y} \right)$, where $\Pi_{c|Z_z}$ is the projector associated with the $c$ outcome of $Z_z$ measurement of Charlie. Now, the quantum violation of the FGI becomes $4$, when each of the four conditional probabilities appearing on the left hand side of the FGI given by (\ref{CP2DI}) is $1$. Hence, the following conditions on the normalized conditional states $\{\varrho^{C}_{a,b|X_x,Y_y}\}_{a,X_x,b,Y_y}$ should be satisfied simultaneously when maximum violation ($4$) of FGI is obtained:

$\bullet$ When Alice gets outcome $1$ by measuring $X_0$ and Bob gets outcome $1$ by measuring $Y_0$, then the conditional state prepared at Charlie's end must be eigenstate of the operator $Z_{0}= \sigma_x$ associated with eigenvalue $+1$, i.e.,
\begin{equation}
\varrho^{C}_{1,1|X_0,Y_0} = \frac{\mathbb{I} + \sigma_x}{2}.
\label{connew1}
\end{equation}

$\bullet$ When Alice gets outcome $0$ by measuring $X_1$ and Bob gets outcome $1$ by measuring $Y_1$, then the conditional state prepared at Charlie's end must be the eigenstate of the operator $Z_{0}= \sigma_x$ associated with eigenvalue $+1$, i.e.,
\begin{equation}
\varrho^{C}_{0,1|X_1,Y_1} = \frac{\mathbb{I} + \sigma_x}{2}.
\label{connew2}
\end{equation}

$\bullet$ When Alice gets outcome $0$ by measuring $X_0$ and Bob gets outcome $1$ by measuring $Y_1$, then the conditional state prepared at Charlie's end must be the eigenstate of the operator $Z_{1}=\sigma_y$ associated with eigenvalue $+1$, i.e.,
\begin{equation}
\varrho^{C}_{0,1|X_0,Y_1} = \frac{\mathbb{I} + \sigma_y}{2}.
\label{connew3}
\end{equation}

$\bullet$ When Alice gets outcome $0$ by measuring $X_1$ and Bob gets outcome $1$ by measuring $Y_0$, then the conditional state prepared at Charlie's end must be the eigenstate of the operator $Z_{1}=\sigma_y$ associated with eigenvalue $+1$, i.e.,
\begin{equation}
\varrho^{C}_{0,1|X_1,Y_0} = \frac{\mathbb{I} + \sigma_y}{2}.
\label{connew4}
\end{equation}

Now, it will be shown that no pure three-qubit state without genuine entanglement can provide maximum quantum violation $4$ of FGI (\ref{CP2DI}).

 \textbf{Pure three-qubit states without genuine entanglement}: Any pure three-qubit state without genuine entanglement can be  in one of  the following two categories:

\textit{i) Fully separable states:} Quantum violation of the tripartite steering inequality (\ref{CP2DI}) implies that the shared state is steerable and hence, entangled. Thus, the fully separable states cannot provide maximum quantum violation $4$ of FGI (\ref{CP2DI}).

\textit{ii) Bi-separable states:} Within the bi-separable states, consider a state $|\psi \rangle$ as shown below,
\be \label{BISEP}
\ket{\psi}=\ket{\psi}_A \otimes \ket{\psi}_{BC},
\ee
where $\ket{\psi}_{BC}$ is an arbitrary pure two-qubit entangled state and  $\ket{\psi}_A$ is an arbitrary pure qubit state. Alice  and Bob perform two arbitrary  projective measurements. For bi-separable states of above kind, Alice's particle is not correlated with Bob's and Charlie's particle. Hence, Charlie's measurement outcome cannot depend on Alice's measurement settings and outcomes. Hence, the sum of conditional probabilities in Eq.(\ref{CP2DI}) will take the following form:
\begin{eqnarray}
CP &=& P(0_{Z_0}|1_{Y_0}) + P(0_{Z_0}|1_{Y_1}) + P(0_{Z_1}|1_{Y_1}) \nonumber \\
 &+& P(0_{Z_1}|1_{Y_0}) .
\label{CPBISEP}
\end{eqnarray}
Hence, in order to get each term of Eq.(\ref{CPBISEP}) equal to $1$, Charlie's conditional state should be simultaneous eigenstate of $Z_{0}= \sigma_x$ and  $Z_{1}=\sigma_y$ when Bob obtains the outcome $1$ by performing the measurement $Y_0$. Similarly, Charlie's conditional state should be simultaneous eigenstate of $Z_{0}= \sigma_x$ as  $Z_{1}=\sigma_y$ when Bob obtains the outcome $1$ by performing measurement $Y_1$. But these are not possible. Hence, an arbitrary bi-separable state of the form (\ref{BISEP}) cannot provide maximum violation $4$ of  FGI (\ref{CP2DI}).

Now, consider a bi-separable pure state $|\psi \rangle$ as shown below,
\be \label{BISEP3}
\ket{\psi}=\ket{\psi}_B \otimes \ket{\psi}_{AC},
\ee
where $\ket{\psi}_{AC}$ is an arbitrary pure two-qubit entangled state and  $\ket{\psi}_B$ is an arbitrary pure qubit state. Alice  and Bob perform two arbitrary  projective measurements. For bi-separable states of above kind, Bob's particle is not correlated with Alice's and Charlie's particle. Hence, Charlie's measurement outcome cannot depend on Bob's measurement settings and outcomes. Hence, the sum of conditional probabilities in Eq.(\ref{CP2DI}) will take the following form:
\begin{eqnarray}
CP &=& P(0_{Z_0}|1_{X_0}) + P(0_{Z_0}|0_{X_1}) + P(0_{Z_1}|0_{X_0}) \nonumber \\
 &+& P(0_{Z_1}|0_{X_1}) .
\label{CPBISEP3}
\end{eqnarray}
Hence, in order to get each term of Eq.(\ref{CPBISEP}) equal to $1$, Charlie's conditional state should be simultaneous eigenstate of $Z_{0}= \sigma_x$ as  $Z_{1}=\sigma_y$ when Alice obtains the outcome $0$ by performing measurement $X_1$. But this is not possible. Hence, an arbitrary bi-separable state of the form (\ref{BISEP3}) cannot provide maximum violation $4$ of  FGI (\ref{CP2DI}).

Now, consider the bi-separable states of the following type,
\be \label{BISEP2}
\ket{\psi}=\ket{\psi}_C \otimes \ket{\psi}_{AB},
\ee 
where $\ket{\psi}_{AB}$ is an arbitrary pure two-qubit entangled state and  $\ket{\psi}_C$ is an arbitrary pure qubit state.  In this case Charlie's particle is not correlated with Alice's particle and Bob's particle. Hence, the sum of conditional probabilities in Eq.(\ref{CP2DI}) will take the following form:
\begin{equation}
CP = 2\big[P(0|Z_0, \ket{\psi}_C)+ P(0|Z_1,\ket{\psi}_C) \big] .
\label{CPBISEP2}
\end{equation}
where $P(0|Z_z, \ket{\psi}_C)$ is the probability of occurrence of the outcome $0$ when the measurement of observable $Z_z$ is performed on the state $\ket{\psi}_C$. The above expression of $CP$ can never give quantum violation of  FGI (\ref{CP2DI}) due to the fine-grained uncertainty relation.

 Next, we will show that no mixed three-qubit state (with or without genuine entanglement)  can give maximum quantum violation $4$ of the FGI given by  (\ref{CP2DI}).

\textbf{Mixed three-qubit states}:  Three-qubit mixed states can be classified as follows \citep{ALS1}:

\textit{i) Fully separable states ($\mathbb{S}$):} This class of states includes those states that can be expressed as convex combination of fully separable pure states. The mixed states belonging to this class, being not entangled, never violate the FGI given by  (\ref{CP2DI}).

\textit{ii) Bi-separable states ($\mathbb{B}$):} These are the states that can be expressed as convex combination of fully separable pure states and bi-separable pure states. 

\textit{iii) W-class states ($\mathbb{W}$):} These are the states that can be expressed as convex combination of fully separable pure states, bi-separable pure states and W-class pure states. 

\textit{iv) GHZ class states ($\mathbb{GHZ}$):} These are the states that can be expressed as convex combination of fully separable pure states and bi-separable pure states, W-class pure states and GHZ-class pure states.  

Hence, in general, any three-qubit mixed state can be written as a convex combination of fully separable pure states and bi-separable pure states, W-class pure states and GHZ-class pure states.

Before proceeding, we want to mention that, for any two genuinely entangled three-qubit pure states (W-class pure states or GHZ-class pure states),  FGI (\ref{CP2DI}) does not give maximum quantum violation ($= 4$) for the same set of measurement settings by the two untrusted parties (Detailed numerical proof is given in the Proposition (1) of the appendix \ref{AD}).

Let us consider an arbitrary mixed three-qubit state $\rho_m$. Since any three-qubit mixed state can be expressed as a convex combination of fully separable pure states and bi-separable pure states, W-class pure states and GHZ-class pure states, we can write the following general decomposition of $\rho_m$,
\begin{align}
\rho_m = \sum_{i} p_{i} \rho^i_{fs} + \sum_{j} q_{j} \rho^{j}_{bs}  +  \sum_{k} r_{k} \rho^k_{W} +  \sum_{l} s_{l} \rho^l_{GHZ},
\label{genw}
\end{align}
 where $0 \leq p_i \leq 1$ $\forall$ $i$, $0 \leq q_j \leq 1$ $\forall$ $j$, $0 \leq r_k \leq 1$ $\forall$ $k$, $0 \leq s_l \leq 1$ $\forall$ $l$, $\sum_{i} p_{i} + \sum_{j} q_{j} +  \sum_{k} r_{k}  +  \sum_{l} s_{l}$ = $1$, $\rho^i_{fs}$ is a fully separable three-qubit pure state for all $i$, $\rho^j_{bs}$ is a bi-separable three-qubit  pure state for all $j$, $\rho^k_{W}$ is a three-qubit W-class pure state for all $k$, $\rho^l_{GHZ}$ is a three-qubit GHZ-class pure state for all $l$. 
 
Now, suppose that Alice obtains the outcome $1$ by performing measurement of the observable $X_0$ and Bob obtains the outcome $1$ by performing measurement of the observable $Y_0$ on the above mixed three-qubit state $\rho_m$. Hence, the normalized conditional state prepared at Charlie's side is given by,
\begin{align}
&\varrho^{C}_{1,1|X_0,Y_0} \nonumber \\
&= \frac{\tr_{AB}\Big[\rho_m \big(\Pi_{1|X_0} \otimes \Pi_{1|Y_0} \otimes \mathbb{I} \big)\Big]}{\tr\Big[\rho_m \big(\Pi_{1|X_0} \otimes \Pi_{1|Y_0} \otimes \mathbb{I} \big)\Big]} \nonumber \\
&=\frac{1}{P(1,1|X_0,Y_0)}\tr_{AB}\Big[ \big( \sum_{i} p_{i} \rho^i_{fs} + \sum_{j} q_{j} \rho^{j}_{bs}  +  \sum_{k} r_{k} \rho^k_{W} \nonumber\\
&\, \, \hspace{0.4cm}+  \sum_{l} s_{l} \rho^l_{GHZ} \big) \big(\Pi_{1|X_0} \otimes \Pi_{1|Y_0} \otimes \mathbb{I} \big)\Big],
\label{mixed1}
\end{align}
where $P(1,1|X_0,Y_0)$ = $\tr\Big[\rho_m \big(\Pi_{1|X_0} \otimes \Pi_{1|Y_0} \otimes \mathbb{I} \big)\Big]$ is the probability that Alice gets the outcome $1$ by performing measurement of $X_0$ and Bob gets the outcome $1$ by performing the measurement of $Y_0$ on the state $\rho_m$. Next, we have,
\begin{align}
&\tr_{AB}\Big[ \big( \sum_{i} p_{i} \rho^i_{fs} + \sum_{j} q_{j} \rho^{j}_{bs}  +  \sum_{k} r_{k} \rho^k_{W} +  \sum_{l} s_{l} \rho^l_{GHZ} \big) \nonumber\\
&\, \, \hspace{0.4cm} \big(\Pi_{1|X_0} \otimes \Pi_{1|Y_0} \otimes \mathbb{I} \big)\Big] \nonumber \\
&= \sum_{i} p_{i} \tr_{AB}\Big[  \rho^i_{fs} \big(\Pi_{1|X_0} \otimes \Pi_{1|Y_0} \otimes \mathbb{I} \big) \Big] \nonumber \\
& \hspace{0.4cm} + \sum_{j} q_{j}  \tr_{AB}\Big[  \rho^{j}_{bs} \big(\Pi_{1|X_0} \otimes \Pi_{1|Y_0} \otimes \mathbb{I} \big) \Big] \nonumber \\
& \hspace{0.4cm} + \sum_{k} r_{k}  \tr_{AB}\Big[  \rho^{k}_{W} \big(\Pi_{1|X_0} \otimes \Pi_{1|Y_0} \otimes \mathbb{I} \big) \Big] \nonumber \\
& \hspace{0.4cm} + \sum_{l} s_{l}  \tr_{AB}\Big[  \rho^{l}_{GHZ} \big(\Pi_{1|X_0} \otimes \Pi_{1|Y_0} \otimes \mathbb{I} \big) \Big] \nonumber \\
&=  \sum_{i} p_{i} P^i_{fs}(1,1|X_0,Y_0) \varrho^{i^C}_{{fs}_{1,1|X_0,Y_0}} \nonumber \\
& \hspace{0.4cm}+ \sum_{j} q_{j} P^j_{bs}(1,1|X_0,Y_0) \varrho^{j^C}_{{bs}_{1,1|X_0,Y_0}} \nonumber \\
& \hspace{0.4cm}+ \sum_{k} r_{k} P^k_{W}(1,1|X_0,Y_0) \varrho^{k^C}_{{W}_{1,1|X_0,Y_0}} \nonumber \\
& \hspace{0.4cm}+ \sum_{l} s_{l} P^l_{GHZ}(1,1|X_0,Y_0) \varrho^{l^C}_{{GHZ}_{1,1|X_0,Y_0}} \nonumber \\
\label{mixed2}
\end{align}
where $P^i_{fs}(1,1|X_0,Y_0)$ is the probability that Alice gets the outcome $1$ by performing measurement of $X_0$ and Bob gets the outcome $1$ by performing the measurement of $Y_0$ on the state $\rho^i_{fs}$; $\varrho^{i^C}_{{fs}_{1,1|X_0,Y_0}}$ is the normalized conditional state prepared at Charlie's side in this case; $P^j_{bs}(1,1|X_0,Y_0)$, $P^k_{W}(1,1|X_0,Y_0)$, $P^L_{GHZ}(1,1|X_0,Y_0)$  and $\varrho^{j^C}_{{bs}_{1,1|X_0,Y_0}}$, $\varrho^{k^C}_{{W}_{1,1|X_0,Y_0}}$, $\varrho^{l^C}_{{GHZ}_{1,1|X_0,Y_0}}$ are defined similarly. Hence, from Eqs.(\ref{mixed1}) and (\ref{mixed2}), we get
\begin{align}
&\varrho^{C}_{1,1|X_0,Y_0} \nonumber \\
&= \sum_{i} p_{i} \, \tilde{P}^i_{fs} \, \varrho^{i^C}_{{fs}_{1,1|X_0,Y_0}} + \sum_{j} q_{j} \, \tilde{P}^j_{bs} \, \varrho^{j^C}_{{bs}_{1,1|X_0,Y_0}} \nonumber \\
& \vspace{0.4cm} + \sum_{k} r_{k} \, \tilde{P}^k_{W} \, \varrho^{k^C}_{{W}_{1,1|X_0,Y_0}} + \sum_{l} q_{l} \, \tilde{P}^l_{GHZ} \, \varrho^{l^C}_{{GHZ}_{1,1|X_0,Y_0}},
\label{conv}
\end{align}
where 
\begin{align}
\tilde{P}^i_{fs} &=  \frac{P^i_{fs}(1,1|X_0,Y_0)}{P(1,1|X_0,Y_0)} \nonumber \\
& = \frac{\tr\Big[\rho^i_{fs} \big(\Pi_{1|X_0} \otimes \Pi_{1|Y_0} \otimes \mathbb{I} \big)\Big]}{\tr\Big[\rho_m \big(\Pi_{1|X_0} \otimes \Pi_{1|Y_0} \otimes \mathbb{I} \big)\Big]} \nonumber \\
& \leq 1,
\end{align}
and similarly, 
\begin{align}
&\tilde{P}^j_{bs} =  \frac{P^j_{bs}(1,1|X_0,Y_0)}{P(1,1|X_0,Y_0)} \leq 1, \nonumber \\
&\tilde{P}^k_{W} =  \frac{P^k_{W}(1,1|X_0,Y_0)}{P(1,1|X_0,Y_0)} \leq 1, \nonumber \\
&\tilde{P}^l_{GHZ} =  \frac{P^l_{GHZ}(1,1|X_0,Y_0)}{P(1,1|X_0,Y_0)} \leq 1.
\end{align}
Hence, Eq.(\ref{conv}) represents convex combination of $\varrho^{C}_{1,1|X_0,Y_0}$ in terms of different normalized states. 

When the FGI given by  (\ref{CP2DI}) is maximally violated by the state $\rho_m$ given by Eq.(\ref{genw}), the condition given by Eq.(\ref{connew1}) should be satisfied. In other words, $\varrho^{C}_{1,1|X_0,Y_0}$ should be a pure state and eigenstate of $\sigma_x$. Now, any pure state cannot be written as a convex sum of other different states. Hence, the FGI given by  (\ref{CP2DI}) is maximally violated by the state $\rho_m$  (\ref{genw}) only if each of the states $\varrho^{i^C}_{{fs}_{1,1|X_0,Y_0}}$, $ \varrho^{j^C}_{{bs}_{1,1|X_0,Y_0}}$, $ \varrho^{k^C}_{{W}_{1,1|X_0,Y_0}}$ and $ \varrho^{l^C}_{{GHZ}_{1,1|X_0,Y_0}}$ is the eigenstate of $\sigma_x$, i.e., the normalized conditional state prepared from each of the states $\rho^i_{fs}$, $\rho^{j}_{bs}$, $\rho^{k}_{W}$ and $\rho^{l}_{GHZ}$ satisfies the condition (\ref{connew1}). 

Considering the other three terms appearing on the left hand side of  FGI  (\ref{CP2DI}), it can be shown that the state $\rho_m$  (\ref{genw}) gives maximum quantum violation of FGI (\ref{CP2DI}) only if each of the states  $\rho^i_{fs}$, $\rho^{j}_{bs}$, $\rho^{k}_{W}$ and $\rho^{l}_{GHZ}$ satisfies the conditions (\ref{connew1}), (\ref{connew2}), (\ref{connew3}) and (\ref{connew4}) simultaneously, i.e., each of the states  $\rho^i_{fs}$, $\rho^{j}_{bs}$, $\rho^{k}_{W}$ and $\rho^{l}_{GHZ}$ gives maximum quantum violation ($= 4$) of FGI (\ref{CP2DI}) for the same set of measurement settings performed by the two untrusted parties. However, we have already shown that no pure fully separable and no pure bi-separable three-qubit state can maximally violate FGI  (\ref{CP2DI}). On the other hand, for any two genuinely entangled three-qubit pure states (W-class pure states or GHZ-class pure states),  FGI (\ref{CP2DI}) does not give maximum quantum violation ($= 4$) for the same set
  of measurement settings by the two untrusted parties. 

Hence, no mixed three-qubit state can maximally violate FGI  (\ref{CP2DI}).

Hence, when the shared state is a three-qubit state, then the maximum violation of the FGI (\ref{CP2DI}) certifies that the state is genuinely entangled pure state.
\end{proof}

Next, we will present the following Lemma for the general tripartite state in 2SDI steering scenario having dimension $d_A \otimes d_B \otimes 2$ to complete our proof that the maximum quantum violation of tripartite steering inequality (\ref{CP2DI}) certifies genuine entanglement of three-qubit states in 2SDI scenario.

\begin{Lemma}
 If the maximal violation ($4$) of FGI given by (\ref{CP2DI}) is obtained in our $2$SDI scenario from a tripartite state of dimension $d_A \times d_B \times 2$, then the state of the system can be expressed as a direct sum of copies of three-qubit genuinely entangled pure states.
 \label{lemma2}
\end{Lemma}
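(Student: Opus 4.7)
The plan is to reduce the arbitrary-dimensional case to the three-qubit case of Lemma \ref{lemma1} by a Jordan's-lemma argument on the two untrusted parties, and then to invoke Lemma \ref{lemma1} block by block. This is a standard strategy in two-sided device-independent certification. The central task is to show that once the block decomposition is performed, the maximum violation of (\ref{CP2DI}) forces each active block to host a genuinely entangled pure three-qubit state, while blocks of lower effective dimension must carry zero weight.

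First, I would reduce to the projective case. Since Charlie's side is already characterized, only Alice's and Bob's measurements need attention; by Naimark dilation each of their two-outcome POVMs can be promoted to a projective measurement at the price of enlarging $d_A$ and $d_B$, which does not change the observed statistics or the conclusion. Then, applied to the two dichotomic projective observables $X_0, X_1$ on $\mathbb{C}^{d_A}$, Jordan's lemma supplies an orthonormal basis in which $X_0$ and $X_1$ are simultaneously block-diagonal with blocks of dimension $1$ or $2$. The same holds for $Y_0, Y_1$ on $\mathbb{C}^{d_B}$. Let $\Pi^A_i$ and $\Pi^B_j$ denote the projectors onto these blocks.

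Second, because each $X_x$ commutes with every $\Pi^A_i$ and each $Y_y$ commutes with every $\Pi^B_j$, all assemblage elements $\sigma^{C}_{a,b|X_x,Y_y}$, and hence $CP$, depend only on the block-diagonal part of the shared state. Replacing $\rho_{A'B'C}$ by $\tilde{\rho}_{A'B'C} = \sum_{i,j} (\Pi^A_i \otimes \Pi^B_j \otimes \mathbb{I}_C)\, \rho_{A'B'C}\, (\Pi^A_i \otimes \Pi^B_j \otimes \mathbb{I}_C)$ therefore leaves the FGI value unchanged. This puts the state in manifest direct-sum form $\tilde{\rho}_{A'B'C} = \bigoplus_{i,j} p_{ij}\, \rho^{(ij)}_{A'B'C}$, with $p_{ij}\ge 0$, $\sum_{ij} p_{ij} = 1$, and each $\rho^{(ij)}_{A'B'C}$ a normalized state supported on the $(i,j)$-block tensored with Charlie's qubit. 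Crucially, $CP$ evaluated on $\tilde{\rho}_{A'B'C}$ decomposes as a convex combination of the $CP$ values that the restricted measurements $X_x|_i, Y_y|_j, Z_z$ produce on the individual block states $\rho^{(ij)}_{A'B'C}$.

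Third, I exploit the algebraic upper bound $CP \le 4$ together with this convex structure: if the total $CP$ equals $4$, every block with $p_{ij}>0$ must itself saturate $CP=4$. Within each $2\times 2\times 2$ block, the restricted observables are genuine qubit dichotomic observables and Lemma \ref{lemma1} applies verbatim, forcing $\rho^{(ij)}_{A'B'C}$ to be a genuinely entangled three-qubit pure state $\ketbra{\psi^{(ij)}}{\psi^{(ij)}}$. For blocks of effective dimension $1\times 2\times 2$, $2\times 1\times 2$, or $1\times 1\times 2$, the state is necessarily biseparable or fully separable across the Alice--Bob--Charlie partition, and the biseparable/separable analysis already carried out in the proof of Lemma \ref{lemma1} shows that no such state can reach $CP=4$; hence $p_{ij}=0$ for these blocks. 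Assembling the survivors gives the claimed direct-sum representation as a collection of three-qubit genuinely entangled pure states.

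I expect the main obstacle to be bookkeeping rather than conceptual: keeping track of the four-fold block labelling $(i,j)$ for Alice's and Bob's Jordan decompositions, and verifying that each $2$-dimensional block of $X_x$ (resp.\ $Y_y$) really does act as a valid two-outcome projective qubit measurement so that Lemma \ref{lemma1} is applicable without modification. The treatment of the $1$-dimensional blocks is not a new difficulty but simply an invocation of the bi-separable and fully separable sub-cases already dealt with in Lemma \ref{lemma1}; the only delicate point there is to note that a $1$-dimensional block effectively removes the corresponding untrusted party's ability to influence Charlie's conditional state, reducing the scenario to the already-excluded biseparable form.
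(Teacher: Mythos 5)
Your proposal follows essentially the same route as the paper's proof: Jordan's-lemma block decomposition of Alice's and Bob's dichotomic observables, the observation that maximal violation forces every block with nonzero weight to saturate $CP_{uv}=4$ (which the paper establishes term by term on the four conditional probabilities, since the block weights entering each conditional probability are $q_{uv}P_{uv}(a,b|X_x,Y_y)$ rather than one fixed set, a point your ``convex combination of $CP$ values'' phrasing slightly glosses over but which is repaired exactly by the paper's nonnegativity argument), and then Lemma \ref{lemma1} applied blockwise to obtain the direct sum of genuinely entangled three-qubit pure states. Your explicit handling of one-dimensional Jordan blocks and the Naimark-dilation remark are minor additions that the paper leaves implicit.
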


\begin{proof}
  Here we use a result  \cite{Mas06,Rabelo2012} which states that given two Hermitian operators $A_0$ and $A_1$ with eigenvalues $\pm 1$
 acting on a Hilbert space $\mathcal{H}$, there is a decomposition of $\mathcal{H}$ as a direct sum of subspaces $\mathcal{H}_i$ of dimension $d\le2$ each, such that both $A_0$ and $A_1$ act within
each $\mathcal{H}_i$, that is, they can be written as $A_0= \oplus_i A_0^i$ and
$A_1= \mathop{\oplus}_{i} A_1^i$, where $A_0^i$ and $A_1^i$ act on  $\mathcal{H}_i$.

 In general, in our steering scenario any shared tripartite state lies in $\mathcal{B}(\mathcal{H}_{A'} \otimes \mathcal{H}_{B'} \otimes  \mathcal{H}_{C})$ where the dimension of $\mathcal{H}_{A'}$ and  $\mathcal{H}_{B'}$ (the untrusted sides) are $d_A$ and $d_B$ (where $d_A$ and $d_B$ are arbitrary) respectively, and the dimension of $\mathcal{H}_{C}$ (the trusted side) is $2$. From the above-mentioned result \cite{Mas06} it follows that  $\mathcal{H}_{A'}$ can be expressed as a direct sum of subspaces $\mathcal{H}^u_A$ of dimension $d\le2$ each. Similarly, $\mathcal{H}_{B'}$ can be expressed as a direct sum of subspaces $\mathcal{H}^v_B$ of dimension $d\le2$ each. Hence,
 \begin{align}
\mathcal{H}_{A'} \otimes \mathcal{H}_{B'} \otimes \mathcal{H}_{C} & = (\oplus_{u,v} \mathcal{H}^u_{A} \otimes \mathcal{H}^v_{B}) \otimes \mathcal{H}_{C} \nonumber \\
&\simeq \oplus_{u,v} (\mathcal{H}^u_{A} \otimes \mathcal{H}^v_{B} \otimes \mathcal{H}_{C}).
\label{HD}
\end{align}

Let us consider that $X_x$ = $\Pi_{0|X_x} - \Pi_{1|X_x}$ with $x$ $\in$ $\{0,1\}$, where $\Pi_{a|X_x}$ ($a$ $\in$ $\{0,1\}$) denotes the projector. Hence, one can write  $\Pi_{a|X_x}$ = $\oplus_{u} \Pi^{u}_{a|X_x}$ where each $\Pi^u_{a|X_x}$ acts on $\mathcal{H}^u_{A}$ for all $a$ and $x$. We also denote $\Pi^{u}$ = $\Pi^{u}_{0|X_x} + \Pi^{u}_{1|X_x}$ the projector on $\mathcal{H}^u_A$. Similarly, consider that $Y_y$ = $\Pi_{0|Y_y} - \Pi_{1|Y_y}$ with $y$ $\in$ $\{0,1\}$. One can write  $\Pi_{b|Y_y}$ = $\oplus_{v} \Pi^{v}_{b|Y_y}$ where each $\Pi^{v}_{b|Y_y}$ acts on $\mathcal{H}^v_{B}$ for all $b$ and $y$. We denote $\Pi^{v}$ = $\Pi^{v}_{0|Y_y} + \Pi^{v}_{1|Y_y}$ the projector on $\mathcal{H}^v_B$. On the other hand, $Z_z$ = $\Pi_{0|Z_z} - \Pi_{1|Z_z}$ with $z$ $\in$ $\{0,1\}$, where $\Pi_{c|Z_z}$ ($c$ $\in$ $\{0,1\}$) denotes the projector acting on $\mathcal{H}_{C}$ of dimension $2$.

 Hence, for any state $\rho$ $\in$ $\mathcal{B}(\mathcal{H}_{A'} \otimes \mathcal{H}_{B'} \otimes  \mathcal{H}_{C})$, we have
\begin{align}
&P(c_{Z_{z}}|a_{X_x} b_{Y_y}) \nonumber \\
 & = \frac{P(a,b,c|X_x,Y_y,Z_z)}{P(a,b|X_x,Y_y)} \nonumber \\
&= \frac{\tr\Big[\rho \big(\Pi_{a|X_x} \otimes \Pi_{b|Y_y} \otimes \Pi_{c|Z_z} \big) \Big]}{\tr\Big[\rho \big(\Pi_{a|X_x} \otimes \Pi_{b|Y_y} \big) \Big]} \nonumber \\
&= \frac{\sum_{u,v} q_{uv} \tr\Big[\rho_{uv} \big(\Pi^u_{a|X_x} \otimes \Pi^v_{b|Y_y} \otimes \Pi_{c|Z_z} \big) \Big]}{\sum_{u,v} q_{uv} \tr\Big[\rho_{uv} \big(\Pi^u_{a|X_x} \otimes \Pi^v_{b|Y_y} \big) \Big]} \nonumber \\
&= \frac{\sum_{u,v} q_{uv} \, P_{uv}(a,b,c|X_x,Y_y,Z_z)}{\sum_{u,v} q_{uv} \, P_{uv}(a,b|X_x,Y_y)},
\end{align}
where $q_{uv}$ = $\tr\Big[\rho \big(\Pi^{u} \otimes \Pi^v \otimes \mathbb{I}\big)\Big]$; $\sum_{u,v} q_{uv} = 1$ and $\rho_{uv}$ = $\dfrac{\big(\Pi^{u} \otimes \Pi^v \otimes \mathbb{I}\big) \rho \big(\Pi^{u} \otimes \Pi^v \otimes \mathbb{I}\big)}{q_{uv}}$ $\in$ $\mathcal{B}(\mathcal{H}^u_{A} \otimes \mathcal{H}^v_{B} \otimes  \mathcal{H}_{C})$ is, at most, a three-qubit state.

Now, $P(c_{Z_{z}}|a_{X_x} b_{Y_y})$ will be equal to $1$ if and only if 
\begin{align}
\sum_{u,v} q_{uv} \big( P_{uv}(a,b|X_x,Y_y) -P_{uv}(a,b,c|X_x,Y_y,Z_z)\big) = 0.
\label{condition1}
\end{align} 
Since for all $\rho_{uv}$, the conditional probability $P_{uv}(c_{Z_{z}}|a_{X_x} b_{Y_y})$ = $\dfrac{P_{uv}(a,b,c|X_x,Y_y,Z_z)}{P_{uv}(a,b|X_x,Y_y)}$ $\leq$ $1$, we have $P_{uv}(a,b|X_x,Y_y) -P_{uv}(a,b,c|X_x,Y_y,Z_z)$ $\geq$ $0$ for all $u$, $v$. On the other hand,  $q_{uv} \geq 0$ for all $u$, $v$.
Hence, the above condition (\ref{condition1}) is satisfied if and only if 
\begin{align}
 q_{uv}  = 0 \hspace{0.5cm} \text{or} \hspace{0.5cm} P_{uv}(c_{Z_{z}}|a_{X_x} b_{Y_y})=1 \hspace{0.5cm} \forall \hspace{0.2cm} u,v.
\label{conditionf}
\end{align}

Next, let us define the following,
\begin{align} 
 CP_{uv}&=P_{uv}(0_{Z_0}|1_{X_0}1_{Y_0})+P_{uv}(0_{Z_0}|0_{X_1}1_{Y_1}) \nonumber \\
 &\hspace{0.4cm}+ P_{uv}(0_{Z_1}|0_{X_0}1_{{Y_1}})+P_{uv}(0_{Z_1}|0_{X_1}1_{Y_0})  .
 \label{CP2DIuv}
 \end{align}
 Now, from the above argument it can be concluded that the maximal violation ($4$) of FGI given by (\ref{CP2DI}) is obtained if and only if $CP_{uv}$ = $4$ for all $u$, $v$ unless $q_{uv}$ = $0$.

Now, from Lemma \ref{lemma1}, it is observed that $CP_{uv}$ = $4$ certifies genuinely entangled pure state when the shared state is a three-qubit state.  Hence, if a state $|\psi \rangle$ of dimension $d_A \times d_B \times 2$ leads to maximum quantum violation ($4$) of FGI given by (\ref{CP2DI}), then it is given by,
\begin{align} \label{pdec}
|\psi \rangle=\oplus_{u,v} \sqrt{q_{uv}} |\phi_{uv}^{GE} \rangle,
\end{align}
where $|\phi_{uv}^{GE} \rangle$ are genuinely entangled three-qubit pure states, $\sum_{u,v} q_{uv}=1$, $q_{uv} \geq 0$ $\forall$ $u,v$. Note that when $q_{uv} = 0$, then according to condition (\ref{conditionf}) the corresponding $CP_{uv}$ may not be equal to $4$. But such $q_{uv}$ does not contribute to $|\psi \rangle$.

Note that the shared tripartite state in more general 2SDI steering scenario can have form $\rho_{ABC} = \rho \oplus \rho'$. Here, $\rho$ is a direct sum of three-qubit genuinely entangled states as shown in Eq. (\ref{pdec}) and $\rho'$ can be a direct sum of three-qubit states such that for each three-qubit constituting state, $P(a, b| X_x, Y_y) = P(a, b, c| X_x, Y_y, Z_z) = 0$. Such combination of states also maximally violates the FGI (\ref{CP2DI}) if the state $\rho$ does. Here, the measurements of the three parties, Alice, Bob and Charlie are such that their action on $\rho'$ does not contribute to the conditional probability terms in the FGI (\ref{CP2DI}). Hence, the certified state is not unique.

\end{proof}

As an example, consider a tripartite GGHZ state in the dimension ($d_A \otimes d_B \otimes 2$) which is a direct sum of three-qubit GGHZ states
\begin{equation} \label{MQV4MPE}
	\ket{\psi_{\text{GGHZ}}} = \oplus_{u,v} \sqrt{q_{uv}} |\phi_{uv}^{GGHZ} \rangle,
\end{equation}
where $$|\phi_{uv}^{GGHZ} \rangle=\cos\theta_{uv} \ket{2u,2v,0} + \sin\theta_{uv} \ket{2u+1,2v+1,1} $$
is a three-qubit GGHZ state acting in a subspace of the  $(d_A \times d_B \times 2)$-dimensional
space where $\ket{\psi_{\text{GGHZ}}}$ has the support. It can be easily checked that the state (\ref{MQV4MPE}) maximally violates the FGI for the measurement settings $X_x=\oplus_{u,v} X^{u,v}_x$, 
$Y_y=\oplus_{u,v} Y^{u,v}_y$, where 
\begin{eqnarray}
&& X^{u,v}_0=\sigma^{u,v}_x; \quad  Y^{u,v}_0=\sin2\theta_{u,v} \sigma^{u,v}_x +\cos2\theta_{u,v}\sigma^{u,v}_z \nonumber \\
&& X^{u,v}_1=\sigma^{u,v}_y;  \quad Y^{u,v}_1=\cos2\theta_{u,v} \sigma^{u,v}_z +\sin2\theta_{u,v} \sigma^{u,v}_y \nonumber
\end{eqnarray}
where $\sigma^{u,v}_x$, $\sigma^{u,v}_y$ and $\sigma^{u,v}_z$ are the Pauli matrices acting in a subspace where 
$|\phi_{uv}^{GGHZ} \rangle$ has the support on the untrusted parties' sides and Charlie, the trusted party as usual performs projective qubit mutually unbiased measurements $\sigma_x$ and $\sigma_y$.\\
Therefore, we can state our result  which follows from the above two Lemmas below:\\

\textbf{Result:} The maximal violation of FGI certifies genuine entanglement in three-qubit pure states in our 2SDI scenario.\\

 The maximum violation of the fine-grained inequality (\ref{CP2DI}) certifies that the state is genuinely entangled. However, not every genuinely entangled state reaches  the quantum bound. In our study, we observe that
\begin{itemize}
	\item Every GGHZ class state (\ref{GGHZ}) which is a subclass of pure GHZ class state (\ref{GHZclass}) maximally violates the FGI (\ref{CP2DI}) for the measurements given by Eq.(\ref{obsGGHZp}).
	\item Out of $10^6$ randomly generated pure W-class states (\ref{wclass}), only 44001 states maximally violate the FGI (\ref{CP2DI}) numerically (See the details in Appendix (\ref{AD})).
	\item Only 6879 states out of $10^6$ randomly generated pure GHZ-class states (\ref{GHZclass}) maximally violate the FGI (\ref{CP2DI}) numerically (See the details in Appendix (\ref{AD})).
\end{itemize}
Hence, not every genuinely entangled state maximally violates the FGI (\ref{CP2DI}). However, there are states in  both classes (pure GHZ class and pure W-class states) that achieve the same quantum bound of the inequality (\ref{CP2DI}). So, maximum violation of FGI is sufficient but not necessary for certifying genuine three-qubit entanglement. Next, we show that even the violation and not the maximum violation of the FGI inequality is sufficient to certify the presence of genuine entanglement in 2SDI steering scenario. This is helpful for the experimental setups where maximal violation may not be achievable due to finite precision of the experimental devices. 

Here, we conjecture that in our steering scenario, the LHS bound ($2+\sqrt{2}$) on the FGI is also the bi-separable bound. This implies that if a state violates the FGI inequality in our steering scenario must have genuine entanglement. Note that in Eq. (\ref{fullyprod}) of our derivation of the LHS bound on the FGI,
for each value of $\lambda$, the joint probability of Alice and Bob need not factorize. 
Thus, the LHS bound also holds for the correlations that can be decomposed  as
\begin{align}
    &P(a,b,c|X_x,Y_y,Z_z) \nonumber \\
    &=\sum_{\lambda} P^{C:AB}(\lambda) \, P(a,b|X_x,Y_y, \lambda) \, P(c|Z_z, \rho^C_\lambda)
\end{align}
We also note that the correlations arising from 
noisy GHZ state $\rho^V_{GHZ}$ given by
\begin{equation}
\rho^V_{GHZ}=V \ket{\psi_{\text{GHZ}}} \bra{\psi_{\text{GHZ}}} + (1-V)
\frac{\mathbb{I}_{8 \times 8}}{8}  
\end{equation}
where $\ket{\psi_{\text{GHZ}}}=\frac{1}{\sqrt{2}}(\ket{000}+\ket{111})$ and $\mathbb{I}_{8 \times 8}$ 
is the identity matrix of dimension $8 \times 8$,
violate our FGI if and only if $V> 1/\sqrt{2}$ for the measurements that give rise to the maximal violation of
our FGI by the GHZ state. In Ref. \cite{Bihalan18}, it has been demonstrated that such correlations for $V \le 1/\sqrt{2}$ can be decomposed 
as 
\begin{align}
    &P(a,b,c|X_x,Y_y,Z_z) \nonumber \\
    &=\sum_{\mu} P^{B:AC}(\mu) \, P_Q(a,c|X_x,Z_z, \mu) \, P(b|Y_y, \mu)
\end{align}
where $P_Q(a,c|X_x,Z_z, \mu)$ is a quantum correlation arising from a bipartite state 
$\rho_{AC}$ of dimension $d_A \times 2$ or 
\begin{align}
    &P(a,b,c|X_x,Y_y,Z_z) \nonumber \\
    &=\sum_{\nu} P^{A:BC}(\nu) \, P(a|X_x, \nu) P_Q(b,c|Y_y,Z_z, \nu) \, 
\end{align}
where $P_Q(b,c|Y_y,Z_z, \nu)$ is a quantum correlation arising from a bipartite state 
$\rho_{BC}$ shared by Bob and Charlie of dimension $d_B \times 2$. Therefore, we conjecture
that the LHS bound of  $2+\sqrt{2}$  on our FGI in our steering scenario also holds for the correlations that can be decomposed 
as 
\begin{align}
    &P(a,b,c|X_x,Y_y,Z_z) \nonumber \\
    &=p_1\sum_{\lambda} P^{C:AB}(\lambda) \, P(a,b|X_x,Y_y, \lambda) \, P(c|Z_z, \rho^C_\lambda) \nonumber \\
    &+p_2\sum_{\mu} P^{B:AC}(\mu) \, P_Q(a,c|X_x,Z_z, \mu) \, P(b|Y_y, \mu) \nonumber \\
    &+p_3\sum_{\nu} P^{A:BC}(\nu) \, P(a|X_x, \nu) P_Q(b,c|Y_y,Z_z, \nu).
    \label{bisep}
\end{align}
where, $P^{C:AB}(\lambda), P^{B:AC}(\mu)$ and $P^{A:BC}(\nu)$ are the probability distributions and $\sum_i p_i = 1$ (i = 1, 2, 3). This indicates that the violation of our FGI implies that the correlations between the three parties cannot be decomposed into the bi-separable form (\ref{bisep}). Hence, it demonstrate genuine tripartite steering
which certifies the genuine tripartite entanglement in a $2$SDI way \cite{CSA+15,Bihalan18}.

\section{Conclusions}\label{section6}

In the present work, we have first demonstrated genuine tripartite EPR steering of arbitrary three-qubit pure GGHZ states and W-class state in partial device independent scenario using a logical argument. In particular, we have shown that the existence of a hybrid LHS model for any three-qubit pure GGHZ state leads to the sharp contradiction: ``$2=1$''. This method rules out the possibility of hybrid LHS models in the tripartite 2SDI scenario more uncompromisingly than the usual steering inequalities. This logical argument has been presented following the well-known GHZ theorem \cite{GHZ,Greenberger1990} which has recently been generalized to the bipartite steering scenario. Our logical contradiction may be regarded as a generalization of the ``steering paradox'' \cite{CSX+16} to more than two parties. 

We have further derived a tripartite 2SDI steering inequality based on the fine-grained uncertainty relation \cite{OW10}. This inequality serves as a generalization of the fine-grained bipartite steering inequality \cite{PKM14}. We have shown that the maximum quantum violation of our  tripartite steering inequality certifies genuine entanglement of pure three-qubit states in the 2SDI scenario. Maximum violation of FGI is associated with genuine tripartite steering since it certifies the presence of genuine entanglement and we conjecture that our LHS bound is also the bi-separable bound in the steering scenario that we have considered. 

Before concluding, it may worth highlighting some possible off-shoots of our present study.
First, practical demonstration of this simple logical contradiction aimed towards showing tripartite steering by photon entanglement based experiments should not be difficult to implement. Note that the quantum violation of the bipartite FGI  has been demonstrated experimentally using two-photon polarization-entangled states \cite{Adeline18,BMJ+20}. This opens up the possibility of experimental demonstration of quantum violation of our proposed steering inequality and certifying genuine entanglement in semi-DI scenario based on the FGI in the near future. Finally,  our analysis brings into focus the question as to whether multipartite quantum steering for more than three parties having arbitrary local dimensions and genuine quantum steering can be demonstrated using sharp logical contradiction.

\section*{Acknowledgement} 
C.J. acknowledges S. N. Bose Centre, Kolkata, for the postdoctoral fellowship and the Polish Academy of Sciences, Poland for partial financial support and thanks J. Kaniewski for useful discussions. D.D. acknowledges the Science and Engineering Research Board (SERB), Government of India for financial support through the National Post Doctoral Fellowship (NPDF) (File No. PDF/2020/001358). S.D. acknowledge financial support from the INSPIRE program, Department of Science and Technology, Government of India (Grant No. C/5576/IFD/2015-16). S.G. acknowledges the S. N. Bose National Centre for Basic Sciences, Kolkata for financial support. A.S.M. acknowledges Project No. DST/ICPS/QuEST/2018/98 from the Department of Science and Technology, Government of India. \textbf{On behalf of all authors, the corresponding author states that there is no conflict of interest.}

\begin{widetext}
\appendix

\section{All-versus-nothing proof of genuine steering of GGHZ states in 1SDI scenario}{\label{B}}
Here, we demonstrate that the existence of hybrid LHS models leads to the contradiction "$2 = 1$"  in the 1SDI scenario for  any pure state that belongs the generalized GHZ (Greenberger-Horne-Zeilinger) class (\ref{GGHZ}). Let us recapitulate the form of the assemblage as described by  hybrid LHS models in the 1SDI scenario. In this case, Alice being the untrusted party performs the measurement and the tripartite state imposes the constraints on the observed assemblage at the Bob-Charlie end. If the shared state is of the form (\ref{bisepstate}), the assemblage has the following form \cite{CSA+15}
\begin{align}
	\sigma_{a|X_x}^{BC} = \sum_\lambda p_\lambda^{A':BC} p(a|X_x,\lambda)\rho_\lambda^{BC} + \sum_\mu p_\mu^{B:A'C} \rho_\mu^B \otimes \sigma_{a|X_x,\mu}^C + \sum_\nu p_\nu^{A'B:C} \sigma_{a|X_x,\nu}^B \otimes \rho_\nu^C
\label{assembisep1S}	
\end{align}
The assemblage (\ref{assembisep1S}) contains three terms. The first term is an unsteerable assemblage from Alice to Bob-Charlie.  Bob-Charlie's assemblage is dependent on Alice's input, and output through the common variable $\lambda$. The second term is unsteerable from Alice to Bob but not necessarily from Alice to Charlie. In this case,  Bob-Charlie's assemblage is dependent on Alice's input, output at Charlie's end only, and the common hidden variable $\mu$. The third term is unsteerable from Alice to Charlie but not necessarily from Alice to Bob. In this case,  Bob-Charlie's assemblage is dependent on Alice's input, output at Bob's end only, and the common hidden variable $\nu$. When each element of Bob-Charlie's assemblage  can be written in the form (\ref{assembisep1S}), then the assemblage does not demonstrate genuine EPR steering in 1SDI scenario but it may demonstrate steering. On the other hand, if the assemblage (\ref{assem2SDI}) can be written in the form ($\sum_\lambda P(\lambda)P(a|X_x,\lambda)\rho_\lambda^B \otimes \rho_\lambda^C$), then the assemblage is unsteerable and no signature of steering is present in the 1SDI scenario.

 Now, Alice performs dichotomic projective measurements corresponding to the observables: $X_0 = \sigma_x$ and $X_1 = \sigma_y$ on her part of the shared GGHZ state (\ref{GGHZ}). After Alice's measurements, a total of four unnormalized conditional states $\sigma_{a|X_x}^{BC^{\text{GGHZ}}}$ (with a, x $\in \{0, 1\}$) are prepared at Bob-Charlie's end as mentioned below.
\begin{eqnarray}
	\sigma_{0|X_0}^{BC^{\text{GGHZ}}} &=& \frac{1}{2} \Big| \theta_+^0 \Big\rangle \Big\langle \theta_+^0 \Big| = p(1) \rho_1^{BC},\quad \hspace{0.45cm} \sigma_{1|X_0}^{BC^{\text{GGHZ}}} = \frac{1}{2} \Big| \theta_-^0 \Big\rangle \Big\langle \theta_-^0 \Big| = p(2) \rho_2^{BC}, \nonumber \\ 
    \sigma_{0|X_1}^{BC^{\text{GGHZ}}} &=& \frac{1}{2} \Big| \theta_-^1 \Big\rangle \Big\langle \theta_-^1 \Big| = p(3) \rho_3^{BC},\quad \hspace{0.45cm} \sigma_{1|X_1}^{BC^{\text{GGHZ}}} = \frac{1}{2} \Big| \theta_+^1 \Big\rangle \Big\langle \theta_+^1 \Big| = p(4) \rho_4^{BC},
\label{GGHZassemblage}
\end{eqnarray}
where $\Big| \theta_\pm^0 \Big\rangle = \cos\theta \ket{00} \pm \sin\theta \ket{11}$ and $\Big| \theta_\pm^1 \Big\rangle = \cos\theta \ket{00} \pm i \, \sin\theta \ket{11}$. Hence, a total of four different conditional states are produced on Bob-Charlie's end, each of which are pure states. Since the conditional states are pure, the assemblage is not the convex combination of the three terms of Eq. (\ref{assembisep1S}) but any one of the terms of Eq. (\ref{assembisep1S}). We find that the dependence of  Bob and Charlie's assemblage on Alice's input and output may come from the common hidden variable and it is not the case that only Bob's or Charlie's state changes by the Alice's input and output. This is the feature of the first term of  Eq, (\ref{assembisep1S}). Hence, if the conditional states have hybrid LHS description, then there exists an assemblage $\{p(\lambda) \rho_{\lambda}^{BC}\}$ such that
\begin{equation}
	\sum_{\lambda} p(\lambda) \rho_{\lambda}^{BC} = \rho_{\text{GGHZ}}^{BC} = \text{Tr}_{A} \Big[\big|\psi(\theta)_{\text{GGHZ}} \big\rangle \big\langle \psi(\theta)_{\text{GGHZ}} \big| \Big]
	\label{LHS_1S_GGHZ}
\end{equation}
It is well-known that a pure state cannot be expressed as a convex sum of other different states, i.e., a density matrix of pure state can only be expanded by itself. We can therefore, claim that the ensemble $\{p(\lambda) \rho_{\lambda}^{BC}\}$ consists of four hybrid LHS:\\
 $\{p(1)\rho_1^{BC}, p(2)\rho_2^{BC}, p(3)\rho_3^{BC}, p(4)\rho_4^{BC}\}$ which reproduces the conditional states $\{\sigma_{a|X_x}^{BC^{\text{GGHZ}}}\}_{a, X_x}$ at Bob-Charlie's end. Now, using Eq.(\ref{LHS_1S_GGHZ}) we can write,
\begin{equation}
	\sum_{\lambda = 1}^{4} p(\lambda) \rho_{\lambda}^{BC} = \rho_{\text{GGHZ}}^{BC}
	\label{lhs_1S_GGHZ}
\end{equation}
Next, summing Eq.(\ref{GGHZassemblage}) and then taking trace, the left-hand sides give $2 \tr[\rho^{BC}_{\text{GGHZ}}] = 2$. Here we have used the fact: $\sum_{a=0}^{1} \sigma^{BC^{\text{GGHZ}}}_{a|X_x} = \rho^{BC}_{\text{GGHZ}}$ $\forall$ $x$.
On the other hand, the right-hand sides give $ \tr[\rho^{BC}_{\text{GGHZ}}] = 1$ following Eq.(\ref{lhs_1S_GGHZ}). 
Hence, this leads to a full contradiction of "$2 = 1$". Similarly in 2SDI scenario, the existence of hybrid LHS model leads to a contradiction of "$2 = 1$" when the shared state is a three-qubit GGHZ state (\ref{GGHZ}) as shown in Section (\ref{section2}). 

\section{All-versus-nothing proof of genuine steering of W-class states in 1SDI scenario}{\label{C}}
Here, we demonstrate that the existence of hybrid LHS models leads to the contradiction "$2 = 1$"  in the 1SDI scenario for any pure state that belongs the W-class ($|\psi_w \rangle = c_0 \ket{001} + c_1 \ket{010} + \sqrt{1-c_0^2 - c_1^2}\ket{100}$). Alice performs dichotomic projective measurements corresponding to the observables: $X_0 = \sigma_x$ and $X_1 = \sigma_y$. After Alice's measurements, a total of four unnormalized conditional states $\sigma_{a|X_x}^{BC^{\text{W}}}$ (with a, x $\in \{0, 1\}$) are prepared at Bob-Charlie's end as mentioned below.
\begin{eqnarray}
	\sigma_{0|X_0}^{BC^{\text{W}}} &=& \frac{1}{2} \Big|w_+^0 \Big\rangle \Big\langle w_+^0 \Big| = p(1)\rho_{1}^{BC}, \quad  \hspace{0.87cm}		
	\sigma_{1|X_0}^{BC^{\text{W}}} = \frac{1}{2} \Big|w_-^0 \Big\rangle \Big\langle w_-^0 \Big| = p(2)\rho_{2}^{BC},  \nonumber \\
	\sigma_{0|X_1}^{BC^{\text{W}}} &=& \frac{1}{2} \Big|w_+^1 \Big\rangle \Big\langle w_+^1 \Big| = p(3)\rho_{3}^{BC}, \quad 	\hspace{0.87cm}	
	\sigma_{1|X_1}^{BC^{\text{W}}} = \frac{1}{2} \Big|w_-^1 \Big\rangle \Big\langle w_-^1 \Big| = p(4)\rho_{4}^{BC}, 
	\label{wassemblage}
\end{eqnarray}
where, $\Big|w_{\pm}^0\Big\rangle := \sqrt{1-c_0^2-c_1^2} \ket{00} \pm c_0 \ket{01} \pm c_1 \ket{10}$ and  $\Big|w_{\pm}^1\Big\rangle := \sqrt{1-c_0^2-c_1^2} \ket{00} \pm i \, c_0 \ket{01} \pm i \, c_1 \ket{10}$. Hence, a total of four different conditional states are produced on Bob-Charlie's end, each of which are pure states. Since the conditional states are pure, the assemblage is not the convex combination of the three terms of Eq. (\ref{assembisep1S}) but any one of the terms of Eq. (\ref{assembisep1S}). We find that the dependence of Bob and Charlie's assemblage on Alice's input and output may come from the common hidden variable and it is not the case that only Bob's or Charlie's state changes by the Alice's input and output. This is the feature of the first term of  Eq. (\ref{assembisep1S}). Hence,
if the conditional states have hybrid LHS description, then there exists an assemblage $\{p(\lambda) \rho_{\lambda}^{BC}\}$ such that
\begin{equation}
	\sum_{\lambda} p(\lambda) \rho_{\lambda}^{BC} = \rho_{w}^{BC} = \text{Tr}_{A} \Big[\big|\psi_w \big\rangle \big\langle \psi_w \big| \Big]
	\label{LHS_1S_W}
\end{equation}
It is well-known that a pure state cannot be expressed as a convex sum of other different states, i.e., a density matrix of pure state can only be expanded by itself. We can therefore, claim that the ensemble $\{p(\lambda) \rho_{\lambda}^{BC}\}$ consists of four hybrid LHS:\\
 $\{p(1)\rho_1^{BC}, p(2)\rho_2^{BC}, p(3)\rho_3^{BC}, p(4)\rho_4^{BC}\}$ which reproduces the conditional states $\{\sigma_{a|X_x}^{BC^{\text{W}}}\}_{a, X_x}$ at Bob-Charlie's end. Now, using Eq.(\ref{LHS_1S_W}) we can write,
\begin{equation}
	\sum_{\lambda = 1}^{4} p(\lambda) \rho_{\lambda}^{BC} = \rho_{w}^{BC}
	\label{lhs_1S_W}
\end{equation}
Next, summing Eq.(\ref{wassemblage}) and then taking trace, the left-hand sides give $2 \tr[\rho^{BC}_{w}] = 2$. Here we have used the fact: $\sum_{a=0}^{1} \sigma^{BC^{w}}_{a|X_x} = \rho^{BC}_{w}$ $\forall$ $x$.
On the other hand, the right-hand sides give $ \tr[\rho^{BC}_{w}] = 1$ following Eq.(\ref{lhs_1S_W}). 
Hence, this leads to a full contradiction of "$2 = 1$".

\section{All-versus-nothing proof of genuine steering of W-class states in 2SDI scenario}{\label{D}}
Here, we demonstrate that the existence of hybrid LHS models leads to the contradiction "$4 = 1$"  in the 2SDI scenario for any pure state that belongs the W-class ($|\psi_w \rangle = c_0 \ket{001} + c_1 \ket{010} + \sqrt{1-c_0^2 - c_1^2}\ket{100}$). Alice performs dichotomic projective measurements corresponding to the observables: $X_0 = \sigma_x$ and $X_1 = \sigma_y$. On the other hand, Bob performs dichotomic projective measurements corresponding to the observables: $Y_0 = \frac{\sigma_x + \sigma_z}{\sqrt{2}}$ and $Y_1 = \frac{\sigma_y + \sigma_z}{\sqrt{2}}$. After Alice's and Bob's measurements, a total of sixteen unnormalized conditional states $\sigma_{a,b|X_x,Y_y}^{C^{\text{W}}}$ (with a, b, x, y $\in \{0, 1\}$) are prepared at Charlie's end as mentioned below.
{\tiny
\begin{eqnarray}
	\sigma_{0, 0|X_0, Y_0}^{C^{\text{W}}} &=& \frac{N_{w_1}^{00100}}{8} \Big| \psi_{w_1} \Big\rangle^{00100} \Big\langle \psi_{w_1} \Big| = p(1) \rho_1^{C},\quad \hspace{0.45cm} \sigma_{0,1|X_0, Y_0}^{C^{\text{W}}} = \frac{N_{w_1}^{11001}}{8} \Big| \psi_{w_1} \Big\rangle^{11001} \Big\langle \psi_{w_1} \Big| = p(2) \rho_2^{C}, \nonumber \\ 
    \sigma_{1, 0|X_0, Y_0}^{C^{\text{W}}} &=& \frac{N_{w_1}^{00110}}{8} \Big| \psi_{w_1} \Big\rangle^{00110} \Big\langle \psi_{w_1} \Big| = p(3) \rho_3^{C},\quad \hspace{0.45cm} \sigma_{11|X_0,Y_0}^{C^{\text{W}}} = \frac{N_{w_1}^{11011}}{8} \Big| \psi_{w_1} \Big\rangle^{11011} \Big\langle \psi_{w_1} \Big| = p(4) \rho_4^{C}, \nonumber \\ 
     \sigma_{0, 0|X_0, Y_1}^{C^{\text{W}}} &=& \frac{N_{w_2}^{01100}}{8} \Big| \psi_{w_2} \Big\rangle^{01100} \Big\langle \psi_{w_2} \Big| = p(5) \rho_5^{C},\quad \hspace{0.45cm} \sigma_{0,1|X_0, Y_1}^{C^{\text{W}}} = \frac{N_{w_2}^{10001}}{8} \Big| \psi_{w_2} \Big\rangle^{10001} \Big\langle \psi_{w_2} \Big| = p(6) \rho_6^{C}, \nonumber \\ 
    \sigma_{1, 0|X_0, Y_1}^{C^{\text{W}}} &=& \frac{N_{w_2}^{01110}}{8} \Big| \psi_{w_2} \Big\rangle^{01110} \Big\langle \psi_{w_2} \Big| = p(7) \rho_7^{C},\quad \hspace{0.45cm} \sigma_{11|X_0,Y_1}^{C^{\text{W}}} = \frac{N_{w_2}^{10011}}{8} \Big| \psi_{w_2} \Big\rangle^{10011} \Big\langle \psi_{w_2} \Big| = p(8) \rho_8^{C}, \nonumber \\ 
    \sigma_{0, 0|X_1, Y_0}^{C^{\text{W}}} &=& \frac{N_{w_3}^{00110}}{8} \Big| \psi_{w_3} \Big\rangle^{00110} \Big\langle \psi_{w_3} \Big| = p(9) \rho_9^{C},\quad \hspace{0.45cm} \sigma_{0,1|X_1, Y_0}^{C^{\text{W}}} = \frac{N_{w_3}^{11011}}{8} \Big| \psi_{w_3} \Big\rangle^{11011} \Big\langle \psi_{w_3} \Big| = p(10) \rho_{10}^{C}, \nonumber \\ 
    \sigma_{1, 0|X_1, Y_0}^{C^{\text{W}}} &=& \frac{N_{w_3}^{00100}}{8} \Big| \psi_{w_3} \Big\rangle^{00100} \Big\langle \psi_{w_3} \Big| = p(11) \rho_{11}^{C},\quad \hspace{0.45cm} \sigma_{11|X_0,Y_0}^{C^{\text{W}}} = \frac{N_{w_3}^{11001}}{8} \Big| \psi_{w_3} \Big\rangle^{11001} \Big\langle \psi_{w_3} \Big| = p(12) \rho_{12}^{C}, \nonumber \\ 
    \sigma_{0, 0|X_1, Y_1}^{C^{\text{W}}} &=& \frac{N_{w_4}^{01100}}{8} \Big| \psi_{w_4} \Big\rangle^{01100} \Big\langle \psi_{w_4} \Big| = p(13) \rho_{13}^{C},\quad \hspace{0.45cm} \sigma_{0,1|X_1, Y_1}^{C^{\text{W}}} = \frac{N_{w_4}^{10011}}{8} \Big| \psi_{w_4} \Big\rangle^{10011} \Big\langle \psi_{w_4} \Big| = p(14) \rho_{14}^{C}, \nonumber \\ 
    \sigma_{1, 0|X_1, Y_1}^{C^{\text{W}}} &=& \frac{N_{w_4}^{01110}}{8} \Big| \psi_{w_4} \Big\rangle^{01110} \Big\langle \psi_{w_4} \Big| = p(15) \rho_{15}^{C},\quad \hspace{0.45cm} \sigma_{11|X_1,Y_1}^{C^{\text{W}}} = \frac{N_{w_4}^{10001}}{8} \Big| \psi_{w_4} \Big\rangle^{10001} \Big\langle \psi_{w_4} \Big| = p(16) \rho_{16}^{C},
\label{W2assemblage}
\end{eqnarray}}
{\tiny
where, $\Big| \psi_{w_1}^{abcde} \Big\rangle = \dfrac{\sqrt{2+(-1)^a\sqrt{2}}c_0\ket{1}+(-1)^b\sqrt{2+(-1)^c\sqrt{2}}c_1\ket{0}+(-1)^d\sqrt{2+(-1)^e\sqrt{2}}\sqrt{1-c_0^2-c_1^2}\ket{0}}{\sqrt{N_{w_1}^{abcde}}}$,\\
$\Big| \psi_{w_2}^{abcde} \Big\rangle = \dfrac{\sqrt{2+(-1)^a\sqrt{2}}c_0\ket{1}+(-1)^b \iota \sqrt{2+(-1)^c\sqrt{2}}c_1\ket{0}+(-1)^d\sqrt{2+(-1)^e\sqrt{2}}\sqrt{1-c_0^2-c_1^2}\ket{0}}{\sqrt{N_{w_2}^{abcde}}}$,\\
$\Big| \psi_{w_3}^{abcde} \Big\rangle = \dfrac{\sqrt{2+(-1)^a\sqrt{2}}c_0\ket{1}+(-1)^b\sqrt{2+(-1)^c\sqrt{2}}c_1\ket{0}+(-1)^d \iota \sqrt{2+(-1)^e\sqrt{2}}\sqrt{1-c_0^2-c_1^2}\ket{0}}{\sqrt{N_{w_3}^{abcde}}}$,\\
$\Big| \psi_{w_4}^{abcde} \Big\rangle = \dfrac{\sqrt{2+(-1)^a\sqrt{2}}c_0\ket{1}+(-1)^b \iota (\sqrt{2+(-1)^c\sqrt{2}}c_1\ket{0}+(-1)^d\sqrt{2+(-1)^e\sqrt{2}}\sqrt{1-c_0^2-c_1^2}\ket{0})}{\sqrt{N_{w_4}^{abcde}}}$.\\ }
Hence, a total of sixteen different conditional states are produced on Charlie's end, each of which are pure states. Since the conditional states are pure, the assemblage is not the convex combination of the three terms of Eq. (\ref{assembisep}) but any one of the terms of Eq. (\ref{assembisep}). We find that the dependence of  Charlie's assemblage on Alice and Bob's input and output may come from the common hidden variable and it is not the case that either Alice or Bob can change Charlie's state through their input and output choices. This is the feature of the third term of the Eq. (\ref{assembisep}). Hence, if the conditional states have hybrid LHS description, then there exists an assemblage $\{p(\nu) \rho_{\nu}^{C}\}$ such that
\begin{equation}
	\sum_{\nu} p(\nu) \rho_{\nu}^{C} = \rho_{w}^{C} = \text{Tr}_{AB} \Big[\big|\psi_w \big\rangle \big\langle \psi_w \big| \Big]
	\label{LHS_2S_W}
\end{equation}
It is well-known that a pure state cannot be expressed as a convex sum of other different states, i.e., a density matrix of pure state can only be expanded by itself. We can therefore, claim that the ensemble $\{p(\nu) \rho_{\nu}^{C}\}$ consists of sixteen hybrid LHS: $\{p(1)\rho_1^{C}, p(2)\rho_2^{C}, p(3)\rho_3^{C}, p(4)\rho_4^{C}$, $p(5)\rho_5^{C}, 
p(6)\rho_6^{C}, p(7)\rho_7^{C}, p(8)\rho_8^{C}, p(9)\rho_9^{C}, p(10)\rho_{10}^{C}, p(11)\rho_{11}^{C}, p(12)\rho_{12}^{C}, p(13)\rho_{13}^{C}, p(14)\rho_{14}^{C},$
$ p(15)\rho_{15}^{C}$, $p(16)\rho_{16}^{C}\}$ which reproduces the conditional states $\{\sigma_{a, b|X_x, Y_y}^{C}\}_{a, b, X_x, Y_y}$ at Charlie's end. Now, using Eq.(\ref{LHS_2S_W}) we can write,
\begin{equation}
	\sum_{\nu = 1}^{16} p(\nu) \rho_{\nu}^{C} = \rho_{w}^{C}
	\label{lhs_2S_w}
\end{equation}
Next, summing Eq.(\ref{W2assemblage}) and then taking trace, the left-hand sides give $4 \tr[\rho^{C}_w] = 4$. Here we have used the fact: $\sum_{a ,b = 0}^{1} \sigma^{C}_{a, b|X_x, Y_y} = \rho^{C}_{w}$ $\forall$ $x, y$.
On the other hand, the right-hand sides give $ \tr[\rho^{C}_{w}] = 1$ following Eq.(\ref{lhs_2S_w}). 
Hence, this leads to a full contradiction of "$4 = 1$".

\section{All-versus-nothing proof of steering of pure bi-separable states in 1SDI and 2SDI scenario}{\label{E}}
We first demonstrate that the existence of a hybrid LHS model leads to no contradiction in the 1SDI scenario for bi-separable state, but the existence of LHS models may lead to a contradiction. Consider a state $\psi$ of the form (\ref{BISEP2}). In particular, consider the following bi-separable state,
\begin{equation}
	\psi_{bs} = (\cos \theta_1 \ket{00}_{AB}+\sin \theta_1 \ket{11}_{AB}) \otimes (\cos \theta_2 \ket{0}_C+\sin \theta_2 \ket{1}_C)
\label{bs}
\end{equation} 
Alice performs dichotomic projective measurements corresponding to the observables: $X_0 = \sigma_x$ and $X_1 = \sigma_y$ on her part of the shared bi-separable state (\ref{bs}). After Alice's measurements, a total of four unnormalized conditional states $\sigma_{a|X_x}^{BC^{\text{bs}}}$ (with a, x $\in \{0, 1\}$) are prepared at Bob-Charlie's end as mentioned below.
{\small
\begin{eqnarray}
	\sigma_{0|X_0}^{BC^{\text{bs}}} &=& \frac{1}{2} \Big| \theta_{++}^0 \Big\rangle \Big\langle \theta_{++}^0 \Big| = p(1) \sigma_{0|X_0,1}\otimes \rho_1^{C},\quad \hspace{0.45cm} \sigma_{1|X_0}^{BC^{\text{bs}}} = \frac{1}{2} \Big| \theta_{-+}^0 \Big\rangle \Big\langle \theta_{-+}^0 \Big| = p(1)\sigma_{1|X_0,1}\otimes \rho_1^{C}, \nonumber \\ 
    \sigma_{0|X_1}^{BC^{\text{bs}}} &=& \frac{1}{2} \Big| \theta_{-+}^1 \Big\rangle \Big\langle \theta_{-+}^1 \Big| = p(2) \sigma_{0|X_1,2}\otimes \rho_2^{C},\quad \hspace{0.45cm} \sigma_{1|X_1}^{BC^{\text{bs}}} = \frac{1}{2} \Big| \theta_{++}^1 \Big\rangle \Big\langle \theta_{++}^1 \Big| = p(2) \sigma_{1|X_1,2} \otimes \rho_2^{C},
\label{bsassemblage1S}
\end{eqnarray}}
where $\Big| \theta_{\pm,\pm}^0 \Big\rangle = (\cos\theta_1 \ket{0} \pm \sin\theta_1 \ket{1}) \otimes (\cos\theta_2 \ket{0} \pm \sin\theta_2 \ket{1})$ and $\Big| \theta_{\pm,\pm}^1 \Big\rangle = (\cos\theta_1 \ket{0} \pm \iota \sin\theta_1 \ket{1}) \otimes (\cos\theta_2 \ket{0} \pm \sin\theta_2 \ket{1})$. Hence, a total of two distinguishable conditional states are produced at Bob-Charlie's end, all of which are pure states. Since the conditional states are pure, the assemblage is not a convex combination of the three terms of Eq. (\ref{assembisep1S}) but any one of the terms of Eq. (\ref{assembisep1S}). We find that the dependence of Bob and Charlie's assemblage on Alice's input and output may come from the common hidden variable and it is the case that only Bob's state changes by the Alice's input and output. This is the feature of third term of  Eq. (\ref{assembisep1S}). The common variable for the conditional state $\sigma_{1|X_0}^{BC^{bs}}$ is same as that of $\sigma_{0|X_0}^{BC^{bs}}$. So, using a hybrid LHS model we cannot distinguish between them. Similarly, the states $\sigma_{0|X_1}^{BC^{bs}}$ and $\sigma_{1|X_1}^{BC^{bs}}$ are same according to the hybrid LHS model. Now,
if the conditional states have hybrid LHS description, then there exists an assemblage $\{p(\nu) \rho_\nu^B \otimes \rho_{\nu}^{C}\}$ such that
\begin{equation}
	\sum_{\nu} p(\nu) \rho_{\nu}^{B} \otimes \rho_{\nu}^{C} = \rho_{\text{bs}}^{BC} = \text{Tr}_{A} \Big[\big|\psi_{\text{bs}} \big\rangle \big\langle \psi_{\text{bs}} \big| \Big]
	\label{LHS_1S_BS}
\end{equation}
So, according to the hybrid LHS model, the ensemble $\{p(\nu) \rho_{\nu}^{B} \otimes \rho_{\nu}^{C}\}$ consists of four conditional states out of which two are distinguishable. These two reproduce the conditional states $\{\sigma_{a|X_x}^{BC^{\text{bs}}}\}_{a, X_x}$ at Bob-Charlie's end. Now, using Eq.(\ref{LHS_1S_BS}) we can write,
\begin{equation}
	\sum_{\nu = 1}^{2} p(\nu) \rho_{\nu}^{B} \otimes \rho_{\nu}^{C} = \rho_{\text{bs}}^{BC}
	\label{lhs_1S_BS}
\end{equation}
Next, summing Eq.(\ref{bsassemblage1S}) and then taking trace, both the left-hand  and right-hand sides give $2 \tr[\rho^{BC}_{\text{GGHZ}}] = 2$. Here, we have used the fact: $\sum_{a=0}^{1} \sigma^{BC^{\text{bs}}}_{a|X_x} = \rho^{BC}_{\text{bs}}$ $\forall$ $x$ and $\tr[\rho^{BC}_{\text{bs}}] = 1$ following Eq.(\ref{lhs_1S_BS}). 
Hence, in this case there is no contradiction.
\begin{enumerate}
	\item \textit{Remark-1} Note that in case of an  LHS model, there will be four distinct conditional states and it leads to the contradiction $"2 = 1"$ in the 1SDI scenario for the bi-separable states of the form (\ref{bs}) when Alice performs dichotomic projective measurements corresponding to the observables: $X_0 = \sigma_x$, $X_1 = \sigma_y$. This demonstrates steering in such states.
	\item \textit{Remark-2} The existence of LHS or hybrid LHS  models lead to NO contradiction in the 1SDI scenario for bi-separable states of the form (\ref{BISEP}).
	\item \textit{Remark-3} The characteristics of the pure bi-separable states in which Alice and Charlie are entangled are same as the state of the form (\ref{bs}). This means no contradiction occurs for hybrid LHS models but the existence of LHS models lead to a contradiction.
	\item \textit{Remark-4} Following similar reasoning and using a hybrid LHS model of the form (\ref{assembisep}), it can be shown that the existence of  LHS models lead to no contradiction in the 2SDI scenario for bi-separable states, but the existence of LHS models may lead to a contradiction.
\end{enumerate}

\section{}{\label{AD}}
\begin{proposition}
For any two genuinely entangled three-qubit pure states (W-class pure states or GHZ-class pure states),  FGI (\ref{CP2DI}) does not give maximum quantum violation ($= 4$) for the same set of measurement settings by the two untrusted parties.
\end{proposition}
\begin{proof} 
We use the following numerical strategy to show that no two pure genuinely entangled states can give rise to the maximum violation of FGI (\ref{CP2DI}) for the same set of measurement settings by the untrusted parties:\\\\
\textbf{Numerical strategy:} The precision is set at the 6th decimal place for the numerical calculations. Following steps are evaluated for $10^6$ randomly generated states: (i)
 State parameters are chosen randomly in the allowed range. There are three state parameters for the pure w-class state (\ref{wclass}) (taking normalization into account) and 5 state parameters for the pure GHZ-class (\ref{GHZclass}). (ii) We then numerically maximize the FGI (\ref{CP2DI}) over the measurement parameters of the untrusted parties (Alice and Bob). Charlie's measurements are as usual $\sigma_x$ and $\sigma_y$.
FGI is maximally violated numerically if the violation is more than or equal to 3.99. Note that if we keep the same precision for maximum violation i.e. FGI is maximally violated if the violation is above 3.99999 then that are stricter conditions and are already a part of our observations with aforementioned relaxed conditions.\\
\textit{Examining equality of measurement parameters:} All the states that maximally violates the FGI, their state parameters and the measurement parameters are printed out in distinct row in a file. Using \textbf{sort filename | uniq -c} command that outputs the measurement parameters (for one state, eight measurement parameters are in one line and task is to examine whether any two or more lines in the files are same) in ascending order along with their repeated values. If the row is not repeated, it outputs 1 otherwise the number of time it is repeated.  \\
\begin{itemize}
\item \textit{Pure W-class state:} A general pure w-class state has the following form:
\begin{equation}
	\ket{\psi_w} = \sqrt{a}\ket{001} + \sqrt{b}\ket{010} + \sqrt{c}\ket{100} +\sqrt{d}\ket{000}
\label{wclass}
\end{equation}
We observed that out of $10^6$ randomly generated states, 44001 states maximally violates the FGI but no two states have the same set of measurement parameters for untrusted side.
\item \textit{Pure GHZ-class state:}
A general pure GHZ-class state has the following form:
\begin{equation}
\ket{\psi_{GHZ}^{C}}: \cos \delta \ket{000} + \sin \delta e^{\iota \phi} \ket{\phi_A\phi_B\phi_C}
\label{GHZclass}
\end{equation}
where, $\ket{\phi_A} = \cos \alpha \ket{0} + \sin \alpha \ket{1}$, $\ket{\phi_B} = \cos \beta \ket{0} + \sin \beta \ket{1}$ and $\ket{\phi_C} = \cos \gamma \ket{0} + \sin \gamma \ket{1}$. The above state is a GGHZ state (\ref{GGHZ}) for $\delta = \theta$, $\phi = 0$, $\alpha = \beta = \gamma = \frac{\pi}{2}$.\\
We observed that only 6879 states out of $10^6$ maximally violates the FGI but no two states have the same set of measurement parameters for the untrusted sides. \\
\item \textit{Both w-class and GHZ-class pure state:} Even if we take both GHZ-class and w-class pure states together, we found no two states have the same set of measurement parameters for the untrusted sides.

\textit{*The datasets generated during and/or analysed during the current study are available from the corresponding author on reasonable request.}

  \end{itemize}

\end{proof}
\end{widetext}

\bibliography{Tri} 
\end{document}